\title{Resourceful traces for commuting processes}
\author{Matthew Earnshaw}{Tallinn University of Technology, Estonia}{matt@earnshaw.org.uk}{}{Estonian Research Council grant PRG1210.}%
\author{Chad Nester}{University of Tartu, Estonia}{chad.nester@gmail.com}{}{Estonian Research Council grant PRG2764.}
\author{Mario Román}{University of Oxford, United Kingdom}{mromang08@gmail.com}{}{Air Force Office of Scientific Research (AFOSR) award number FA9550-21-1-0038; Advanced Research + Invention Agency (ARIA), Safeguarded AI Programme.}
\authorrunning{M. Earnshaw, C. Nester and M. Román} %
\keywords{Mazurkiewicz traces, premonoidal categories, monoidal categories, effectful categories} %
\newcommand{\oset}[3][0ex]{%
  \mathrel{\mathop{#3}\limits^{
    \vbox to#1{\kern-2\ex@
    \hbox{$\scriptstyle#2$}\vss}}}}
\newcommand\scalemath[3]{\scalebox{#1}[#2]{\mbox{\ensuremath{\displaystyle #3}}}}
\newcommand{\leftarrowtip}{\ensuremath{\tikz\draw[line width=0.5pt,->] (10pt,0) -- (0,0);}}
\newcommand{\leftarrowtailnotip}{\ensuremath{\tikz\draw[line width=0.5pt,-<] (0,0) -- (10pt,0);}}
\newcommand{\unicodeStar}{\ensuremath{\star}}
\newcommand{\unicodecolon}{\ensuremath{\colon}}
\newcommand{\unicodeleftpar}{\ensuremath{\left(}}
\newcommand{\unicoderightpar}{\ensuremath{\right)}}
\newcommand{\unicoderightcircle}{\ensuremath{\RIGHTcircle}}
\newcommand{\unicodeleftcircle}{\ensuremath{\LEFTcircle}}
\newcommand{\unicodebbA}{\ensuremath{\mathbb{A}}}
\newcommand{\unicodebbB}{\ensuremath{\mathbb{B}}}
\newcommand{\unicodebbC}{\ensuremath{\mathbb{C}}}
\newcommand\UnicodeBlackboardP{\ensuremath{\mathbf{P}}} \DeclareUnicodeCharacter{2119}{\UnicodeBlackboardP}
\newcommand{\unicodecalS}{\ensuremath{\mathcal{S}}}
\newcommand{\unicodecalT}{\ensuremath{\mathcal{T}}}
\newcommand{\unicodecalC}{\ensuremath{\mathcal{C}}}
\newcommand{\unicodecalD}{\ensuremath{\mathcal{D}}}
\newcommand{\unicodecalX}{\ensuremath{\mathcal{X}}}
\newcommand{\unicodecalN}{\ensuremath{\mathcal{N}}}
\newcommand{\unicodecalE}{\ensuremath{\mathcal{E}}}
\newcommand{\hirayo}{\scaleobj{0.9}{\text{\usefont{U}{min}{m}{n}\symbol{'210}}}}
\DeclareFontFamily{U}{min}{}
\DeclareFontShape{U}{min}{m}{n}{<-> udmj30}{}
\newcommand\UnicodeWhiteRightPointingSmallTriangle{\triangleright}
\newcommand\UnicodeWhiteDownPointingSmallTriangle{\triangledown}
\newcommand\UnicodeWhiteUpPointingSmallTriangle{\scalemath{1}{-1}{{}^{\triangledown}}}
\newcommand\smallmath[2]{#1{\raisebox{\dimexpr \fontdimen 22 \textfont 2
      - \fontdimen 22 \scriptscriptfont 2 \relax}{$\scriptscriptstyle #2$}}}
\newcommand\smalloplus{\smallmath\mathbin\oplus}
\newcommand\smallotimes{\smallmath\mathbin\otimes}
\newcommand{\UnicodeApprox}{\ensuremath{\approx}}
\newcommand{\unicodeRelationalComposition}{\fatsemi}
\definecolor{nordnight}{HTML}{4c566a}
\NewDocumentCommand\dev{o}
{
  \IfNoValueTF{#1}
  {\mathsf{dev}}
  {\mathsf{dev}_{#1}}
}
\newcommand\Dev{\mathsf{DevGraph}}
\newcommand\PreMon{\mathsf{PremonCat}}
\newcommand\MonGraph{\mathsf{MonGraph}}
\newcommand\MonCat{\mathsf{MonCat}}
\newcommand\Eff{\mathsf{EffCat}}
\newcommand\EffGraph{\mathsf{EffGraph}}
\newcommand\ef[1]{\mathcal{#1}}
\newcommand\efg[3]{\mathcal{#1} : #2 \to #3}
\newcommand\efc[3]{\mathscr{#1} : #2 \to #3}
\newcommand\efC[1]{\mathscr{#1}}
\newcommand{\Dep}{\mathsf{Dep}}
\newcommand{\Dist}{\mathsf{Dist}}
\newcommand{\C}{\mathbb{C}}
\newcommand{\D}{\mathbb{D}}
\newcommand{\X}{\mathbb{X}}
\newcommand{\Y}{\mathbb{Y}}
\newcommand{\V}{\mathbb{V}}
\newcommand{\W}{\mathbb{W}}
\newcommand{\dv}[1]{\mathcal{D}_{#1}}
\newcommand{\nperp}{\mathbin{\rotatebox{90}{\ensuremath{\nvdash}}}}
\newcommand\freemon[1]{\mathcal{F}_{\smallotimes}#1}
\NewDocumentCommand\freeeff{o}
{
  \IfNoValueTF{#1}
  {\mathscr{F}}
  {\mathscr{F}(#1)}
}
\begin{document}

\maketitle
\begin{abstract}
  We show that, when the actions of a Mazurkiewicz trace are considered not merely as atomic (i.e., mere names) but transformations from a specified type of inputs to a specified type of outputs, we obtain a novel notion of presentation for effectful categories (also known as generalised Freyd categories), a well-known algebraic structure in the semantics of side-effecting computation. Like the usual representation of traces as graphs, our notion of presentation gives rise to a graphical calculus for effectful categories. We use our presentations to give a construction of the commuting tensor product of free effectful categories, capturing the combination of systems in which the actions of each must commute with one another, while still permitting exchange of resources.
\end{abstract} 

\section{Introduction} \label{sec:intro}

Mazurkiewicz traces \cite{booktraces,mazurkiewicz} provide a simple but powerful model of concurrent systems. Traces are a generalization of words, in which specified pairs of symbols (thought of as actions) can commute. Commuting actions $a$ and $b$ are \emph{independent}: their possible concurrent execution $ab$ is observationally indistinguishable from $ba$.

Just as free monoids are the algebraic structure formed by words, \emph{free partially commutative monoids} \cite{foata1969problemes} or \emph{trace monoids} \cite{mazurkiewicz,zielonka} are the algebraic structure formed by (Mazurkiewicz) traces. Traces therefore permit an algebraic approach to the analysis of concurrent systems, by analogy with the use of algebraic methods in automata theory.

Mazurkiewicz states that intuitively, ``\emph{actions} are state transformations of some resources of a system'' and that ``actions are independent if they act on disjoint sets of resources'' \cite{Mazurkiewicz1977}. We contend that this view conflates two notions of resource that might be present in a concurrent system, and it is the remit of this paper to show that by teasing them apart, we obtain a richer algebraic structure.

On the one hand are shared resources corresponding to definite noun phrases, such as ``\emph{the} database'', ``\emph{the} memory location'', or ``\emph{the} printer''. These indeed give rise to relations of (in)dependence between actions. In this paper, we shall refer to such resources as \emph{devices}. On the other hand are the kinds of resources and their transformations axiomatized by \emph{monoidal categories}, as in Coecke, Fritz and Spekkens \cite{coecke2016mathematical}. These are resources which act like \emph{types}, and generally correspond to indefinite noun phrases such as ``\emph{a} database query'', ``\emph{an} integer'', or ``\emph{}a message''. Given any two such resources, we can always consider their conjunction ``in parallel'', and they do not necessarily lead to relations of (in)dependence between actions.

For example, consider \Cref{fig:running-ex}, which introduces a running example, in our graphical notation. The left-hand side of the figure corresponds to our analogue of the alphabet of actions underlying a trace: we call this an \emph{effectful graph}. In our example, the effectful graph contains a process corresponding to the action of \emph{emitting a document}, with a single output string, and two processes corresponding to the action of \emph{printing} received data on two distinct printers. Note that these actions are no longer \emph{atomic}: they have strings corresponding to resources (depicted as solid strings, in general annotated with particular resources), that here indicate the data to be printed. Crucially, the two printing actions are annotated by distinct \emph{devices}, depicted as patterned strings. Intuitively, a \emph{resourceful trace} is a \emph{string diagram} (directed from a left to a right boundary) built by plugging generators together, while only allowing each device string to appear at most once in each vertical section.

\begin{figure}[h]
  \centering
  \includegraphics[width=\textwidth]{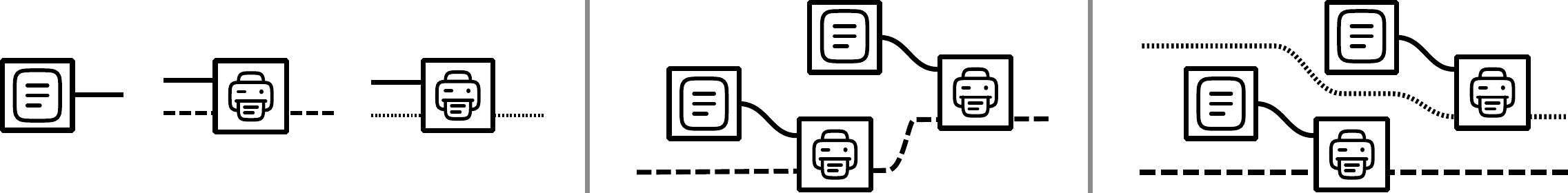}
  \caption{(Left) An effectful graph. (Center) Resourceful trace over the first two elements of the effectful graph, i.e. involving only one printer. (Right) Resourceful trace involving two printers: printings are now independent of one another.}
  \label{fig:running-ex}
\end{figure}

If we only admit \emph{one} printer in our system, printing actions must occur in a specific order, and this is captured by the topology of the trace, as in the center of \Cref{fig:running-ex}. However, if we admit both printers, printing can occur concurrently, and this is again captured by the topology of the traces, such as on the right of \Cref{fig:running-ex}, where the printing actions may slide past one another  or \emph{interchange}. We shall see in \Cref{sec:commutingtensor} that this example in fact arises by a canonical combination of the ``theory of a printer'' with itself, called the \emph{commuting tensor product}, generalizing the commuting tensor product of algebraic theories \cite{Freyd1966}.

The collection of all resourceful traces over an effectful graph assembles into an \emph{effectful category} \cite{roman2025}, also known as generalized Freyd categories \cite{commutativity,10.1007/3-540-48523-6_59}, a well-known algebraic structure in the semantics of effectful computation. Indeed, one of the results of this paper is that resourceful traces are the morphisms of free effectful categories. %

\paragraph*{Related work} That Mazurkiewicz traces can be seen as morphisms in certain free monoidal categories was elaborated by Sobociński and the first author \cite{earnshaw_et_al:LIPIcs.MFCS.2023.43}. By moving to effectful categories, we sharpen this viewpoint, recovering Mazurkiewicz traces precisely as the morphisms of free effectful categories with no resources. The construction of free effectful categories over effectful graphs with a single device was given by the third author, following Jeffrey \cite{jeffrey1997:premonoidal,EPTCS380.20,roman2025}. The construction of the commuting tensor product of effectful categories appeared in \cite{earnshaw2025}, in terms of string diagrams, covering the case of morphisms supported by a finite number of devices. The central results of this paper are new, namely the existence of an adjunction between effectful graphs and effectful categories (\Cref{thm:adjeff}), and \Cref{thm:tensor} on the commuting tensor product of free effectful categories.

\paragraph*{Outline} \Cref{sec:maztrace} recalls the notion of Mazurkiewicz trace, and their graphical representation.  \Cref{sec:free-effectfuls} introduces \emph{effectful graphs} and \emph{effectful categories}, via their underlying \emph{device graphs} and \emph{monoidal graphs}, generalizing the notion of distribution of an alphabet of actions. We give a construction of the free effectful category over an effectful graph, whose morphisms are \emph{resourceful traces}. In \Cref{sec:interference}, we show how the cliques construction from the theory of traces generalizes to effectful categories, and exhibits the free construction as a left adjoint. Finally, \Cref{sec:commutingtensor} gives a construction of the commuting tensor product of free effectful categories, capturing the combination of systems in which the morphisms of each are forced to commute with one another, even while they may exchange resources.
\section{Mazurkiewicz traces} \label{sec:maztrace}

In this section, we recall the basic definitions of trace theory, including the graphical presentation of traces. For more details, we refer to Mazurkiewicz, Hoogeboom and Rozenberg \cite{booktraces}. For this section, we fix a set of actions $\Sigma$.

\begin{definition} 
  A ""dependency relation"", $D ⊆ Σ × Σ$, is a reflexive, symmetric relation. Dependency relations form a preorder, $\Dep_Σ$, with order the inclusion of relations.
\end{definition}

Intuitively, a "dependency relation" specifies actions which should not occur concurrently, because they have some kind of dependency on each other, or the potential to \emph{interfere}, such as writing to the same location in memory.

\begin{definition} \label{defn:tracemonoid}
  Given a "dependency relation" $D \subseteq \Sigma \times \Sigma$, let $\equiv_D$ be the least congruence on $Σ^{*}$ such that for every pair of actions $a,b \in D$, $(a,b) \notin D$ implies $ab \equiv_D ba$. The ""free partially commutative monoid"" or "trace monoid" generated by $D$ is the quotient monoid $\Sigma^{*}/{\equiv_D}$. An element of the trace monoid is a ""Mazurkiewicz trace"" (or simply \emph{trace}) over $(\Sigma, D)$.
\end{definition}

A \emph{trace} is thus an equivalence class of words up to commutation of \emph{independent} actions. Another construction of traces starts from \emph{distributions} of the set of actions:

\begin{definition}%
  \label{defn:dist}
  A ""distribution"" of $\Sigma$ is a function $\mathsf{dev} : \Sigma \to
  \mathscr{P}(\{1,...,k\})$ for some $k \geqslant 1$. Distributions form a
  preorder $\Dist_\Sigma$ with $\mathsf{dev} \geqslant \mathsf{dev}'$ if and
  only if for all $a,b \in \Sigma$,  \[\mathsf{dev}'(a) \cap \mathsf{dev}'(b)
  \neq \varnothing \implies \mathsf{dev}(a) \cap \mathsf{dev}(b) \neq
  \varnothing.\]
\end{definition}

Classically, $\dev[](\sigma)$ is known as the set of \emph{locations} of $\sigma$. In line with the terminology introduced in the following, we call $\mathsf{dev}(\sigma)$ the set of \emph{devices} of $\sigma$. In terms of concurrency, we might consider $\mathsf{dev}(\sigma)$ to be the set of shared resources on which $\sigma$ depends, such as \emph{memory locations}, \emph{execution threads}, \emph{runtimes} or \emph{peripherals}.

We can represent a distribution graphically by introducing nodes (depicted as boxes) corresponding to actions in $\Sigma$, with interfaces (depicted as patterned strings) corresponding to the devices assigned to the action, as for example in \Cref{fig:trace-signature}.

\begin{figure}[h]
  \centering
  \includegraphics[scale=0.65]{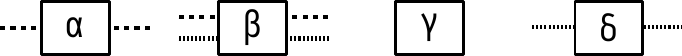}
  \caption{Graphical representation of a distribution: $\alpha$ and $\beta$ share one device, and $\beta$ and $\gamma$ share one device, whereas $\gamma$ has no devices.}
  \label{fig:trace-signature}
\end{figure}

Traces are the diagrams that we can build by plugging these components together, subject to the restriction that each device string occurs exactly once in each vertical slice of the diagram, such as in \Cref{fig:trace}. This is formalized by the algebra of (pre)monoidal categories, as we shall see in the next section. The collection of all such diagrams also forms a monoid, isomorphic to the "trace monoid". More details can be found in the work of Soboci\'{n}ski and the first author \cite{earnshaw_et_al:LIPIcs.MFCS.2023.43}, which builds on earlier work on the representation of traces as graphs \cite{booktraces}.

\begin{figure}[h]
  \centering
  \includegraphics[scale=0.65]{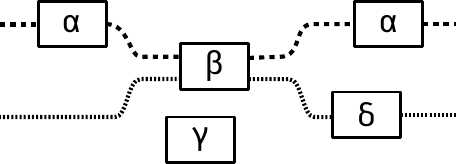}
  \caption{Graphical representation of a trace over the distribution in \Cref{fig:trace-signature}. As an equivalence class of words in the "trace monoid", this trace would include for example $\gamma\alpha\beta\alpha\delta$ and $\alpha\beta\gamma\delta\alpha$: independence is captured by \emph{sliding} actions, while keeping the boundaries fixed.}
  \label{fig:trace}
\end{figure}

A classical construction lets us move between "dependency relations" and "distributions": devices correspond to (non-trivial) \emph{maximal cliques} in the graph of the "dependency relation". For example, the dependency relation depicted in \Cref{fig:depgraph} gives rise to the distribution in \Cref{fig:trace-signature}.  In \Cref{sec:interference}, we shall apply a similar construction to \emph{effectful categories}.

\begin{definition}
  The ""graph"" of a "dependency relation" $\Sigma \subseteq D \times D$ has vertices the elements of $\Sigma$ and an edge $(a,b)$ for every $(a,b) \in D$.
\end{definition}

\begin{figure}[h]
  \centering
  \includegraphics[scale=0.45]{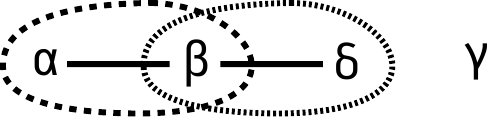}
  \caption{Graph of a dependency relation, whose indicated non-trivial maximal cliques give rise to the distribution of \Cref{fig:trace-signature}.}
  \label{fig:depgraph}
\end{figure}

\begin{proposition}[\S 4.1, \cite{earnshaw_et_al:LIPIcs.MFCS.2023.43}] \label{prop:subposet}
 Every "dependency relation" $D$ on $\Sigma$ gives rise to a "distribution" of $\Sigma$ by taking a bijection of $\{1,...,k\}$ with the non-trivial maximal cliques of the graph of $D$, with $\mathsf{dev}$ sending an action to the subset of such cliques to which it belongs. This extends to a Galois insertion, $\mathsf{cliques} ፡ \Dep_Σ \hookrightarrow \Dist_Σ$.
\end{proposition}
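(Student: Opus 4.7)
The plan is to exhibit an explicit right adjoint $r \colon \Dist_\Sigma \to \Dep_\Sigma$ to $\mathsf{cliques}$, verify that both maps are monotone, check the Galois connection condition, and show that the insertion identity $r \circ \mathsf{cliques} = \mathrm{id}$ follows from the definition of maximal clique. Concretely, I would set $r(\mathsf{dev}) = \{(a,b) \in \Sigma \times \Sigma : \mathsf{dev}(a) \cap \mathsf{dev}(b) \neq \varnothing\} \cup \Delta_\Sigma$, where $\Delta_\Sigma$ is the diagonal; this is reflexive and symmetric, hence a dependency relation, and monotonicity of $r$ is immediate from the definition of the $\Dist_\Sigma$-order.

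For monotonicity of $\mathsf{cliques}$, suppose $D \subseteq D'$ and $\mathsf{cliques}(D)(a) \cap \mathsf{cliques}(D)(b) \neq \varnothing$, so some non-trivial maximal clique $C$ of the graph of $D$ contains both $a$ and $b$. Since $D \subseteq D'$, $C$ is still a clique of the graph of $D'$, and extending it to a maximal clique $C'$ there yields a non-trivial maximal clique witnessing $\mathsf{cliques}(D')(a) \cap \mathsf{cliques}(D')(b) \neq \varnothing$.

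The Galois connection condition $\mathsf{cliques}(D) \leq \mathsf{dev}$ iff $D \subseteq r(\mathsf{dev})$ I would verify by direct unfolding. For the forward direction, given $(a,b) \in D$ with $a \neq b$, the set $\{a,b\}$ is a clique of the graph of $D$ and extends to some non-trivial maximal clique, so $\mathsf{cliques}(D)(a) \cap \mathsf{cliques}(D)(b) \neq \varnothing$; the hypothesis then forces $\mathsf{dev}(a) \cap \mathsf{dev}(b) \neq \varnothing$, i.e.\ $(a,b) \in r(\mathsf{dev})$, while reflexive pairs lie in $r(\mathsf{dev})$ by construction. The converse is symmetric. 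Finally, the insertion identity $r \circ \mathsf{cliques} = \mathrm{id}$ holds because every non-diagonal edge of $D$ lies in some non-trivial maximal clique (giving one inclusion), and conversely any non-diagonal pair in $r(\mathsf{cliques}(D))$ arises from a common clique of the graph of $D$ and is therefore already an edge of $D$.

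The main subtlety is treating reflexive pairs and fixing what \emph{non-trivial} should mean: the $\Dist_\Sigma$-order specialized at $a = b$ collapses to $\mathsf{dev}(a) \neq \varnothing$, so isolated vertices of the dependency graph (elements whose only dependency is with themselves) must receive the empty set of devices under $\mathsf{cliques}$. This forces ``non-trivial'' to mean ``of cardinality at least two''; once this convention is fixed, the reflexive bookkeeping is automatic, and the four steps above combine to give the claimed Galois insertion $\mathsf{cliques} \colon \Dep_\Sigma \hookrightarrow \Dist_\Sigma$.
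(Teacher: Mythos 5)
The paper does not actually prove this proposition---it is imported from \S 4.1 of the cited reference---so there is no in-text argument to compare against; judged on its own terms, your proof is correct and is the natural one. Your candidate right adjoint $r(\mathsf{dev})=\{(a,b)\mid \mathsf{dev}(a)\cap\mathsf{dev}(b)\neq\varnothing\}\cup\Delta_\Sigma$ is exactly the induced-dependency map, both monotonicity checks are sound (a common non-trivial maximal clique of $D$ extends to a maximal clique of $D'\supseteq D$ and stays non-trivial), the equivalence $\mathsf{cliques}(D)\leq\mathsf{dev}\iff D\subseteq r(\mathsf{dev})$ holds with $\mathsf{cliques}$ as the lower adjoint, and $r\circ\mathsf{cliques}=\mathrm{id}$ follows as you say because every non-diagonal edge of $D$ lies in a non-trivial maximal clique while a shared clique between distinct actions is an edge. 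You also correctly isolate the one genuine subtlety: since the $\mathsf{Dist}_\Sigma$-order at $a=b$ degenerates to ``$\mathsf{dev}(a)\neq\varnothing$ implies $\mathsf{dev}'(a)\neq\varnothing$'', singleton cliques must be discarded, which matches the convention the paper adopts later in \Cref{defn:underlyingdev}. Two points to tighten: ``the converse is symmetric'' hides the diagonal case of that direction---if $\mathsf{cliques}(D)(a)\neq\varnothing$ then $a$ has a neighbour $c\neq a$, and $(a,c)\in D\subseteq r(\mathsf{dev})$ gives $\mathsf{dev}(a)\neq\varnothing$---so spell that out where it is used rather than only in the closing remark; and the bijection with $\{1,\dots,k\}$ tacitly assumes finitely many maximal cliques (e.g.\ $\Sigma$ finite), with the degenerate case of no non-trivial cliques clashing with the requirement $k\geq 1$ in \Cref{defn:dist}---a wrinkle inherited from the paper's definitions rather than a flaw in your argument.
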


Finally, we remark that any monoid has an underlying "distribution", which can be found using the maximal cliques construction. A monoid $M$ has "dependency relation" $D$ on its elements defined by $(a,b) \in D$ just when $a$ and $b$ do not commute in $M$, and taking non-trivial maximal cliques in the graph of this "dependency relation" defines a distribution of $M$. We shall extend this idea to \emph{effectful categories} in \Cref{sec:interference}.

\section{Resourceful traces and free effectful categories} \label{sec:free-effectfuls}
In traces, actions are merely atomic \emph{names}, equipped with a set of devices. In this section, we enrich the language of traces by allowing actions to additionally have input and output types, which we think of as resources that may be shared between actions. Actions therefore become \emph{morphisms}, with a designated source and target.

In \Cref{sec:effgraphs}, we introduce the resourceful analogue of the "distribution" of a set of actions, which we call an "effectful graph". In an "effectful graph", in addition to a set of devices, every action is equipped with input and output types of resources. Then, in \Cref{sec:free}, we show that just as a "distribution" generates a "free partially commutative monoid", effectful graphs generate free \emph{effectful categories} \cite{roman2025}, also known as \emph{generalized Freyd categories}  \cite{commutativity,10.1007/3-540-48523-6_59}.

\subsection{Effectful graphs} \label{sec:effgraphs}

\kl{Effectful graphs} refine \kl{distributions}: not only by equipping actions with input and output types, but additionally by designating some \emph{central} or \emph{pure} actions, with only the impure actions being potentially able to interfere. In programming language theory, this is the division between \emph{values} and \emph{computations} \cite{POWER2002303,LEVY2003182}.

One reason for this can be seen already at the level of "trace monoids". \emph{Morphisms} of "trace monoids" are, a priori, simply morphisms of monoids, but these need not preserve \emph{centrality} of actions. That is, if an action $\alpha$ commutes with all actions in a "trace monoid" $M$ (i.e., $\alpha$ is \emph{central} in $M$), then its image under a monoid homomorphism $M \to N$ need not commute with all actions of $N$. Thus, certain actions that we might consider to be \emph{pure} might change their role under a translation of systems. However, restricting the notion of morphism to require preservation of centrality is too strong, since it is possible for computations to be incidentally central: an example may be found in Staton and Levy \cite[\S 5.2]{10.1145/2429069.2429091}.

To overcome this, we explicitly designate which actions we want to consider as pure, and ask that centrality of just these actions is preserved. The chosen set of \emph{pure} actions is specified by a \emph{monoidal graph}:

\begin{definition}
  A ""monoidal graph"" $M$ consists of a set of objects, $V_M$; a set of arrows, $E_M$; and a pair of functions $\delta_0,\delta_1 : E_M \to V_M^{*}$ assigning source and target \emph{lists} of objects to each arrow. We can equivalently present a monoidal graph by a set of objects $V_M$, and sets $M(X_1,...,X_n; Y_1,...,Y_m)$ of arrows for each pair of lists of objects: we shall find both perspectives useful.
 \end{definition}

 We can picture a "monoidal graph" as on the left of \Cref{fig:graphs}, where boxes depict the arrows, and their associated strings designate the source and targets. We use \emph{solid} strings for the objects of a "monoidal graph". %

\begin{definition}
  A ""morphism of monoidal graphs"" $\alpha : M \to N$ comprises two functions $\alpha_E : E_M \to E_N$ and $\alpha_V : V_M \to V_N$ commuting with $\delta_0$ and $\delta_1$, that is, $\alpha_V^{*} \circ \delta_i = \delta_i \circ \alpha_E$.

  Equivalently, a morphism is specified by a function $\alpha$ on objects, and functions
  \[M(X_1, ...,X_n; Y_1, ...,Y_m) \to N(\alpha X_1, ..., \alpha X_n; \alpha Y_1 , ..., \alpha Y_m).\]

  It is easy to check that morphisms of monoidal graphs compose, the composite given by composing their underlying actions on arrows and objects, and that we have identities. Denote by $""\MonGraph""$ the category of "monoidal graphs" and their morphisms.
\end{definition}

The set of morphisms that may be equipped with devices, specifying dependencies between morphisms, is given by a \emph{device graph}, which is simply a "monoidal graph" together with extra data associating a set of devices to each morphism:

\begin{definition}
  A ""device graph"" $C$ comprises
  \begin{itemize}
    \item a monoidal graph $|C|$,
    \item a set $\dv{C}$ of devices, and
    \item a function $\dev[C] : E_{|C|} \to \mathcal{P}(\dv{C})$ assigning each arrow in $|C|$ to a subset of the devices. We shall simply write $\dev$ when the "device graph" $C$ is clear from context.
    \end{itemize}
  \end{definition}

  \begin{figure}[h]
    \centering
    \includegraphics[width=0.75\textwidth]{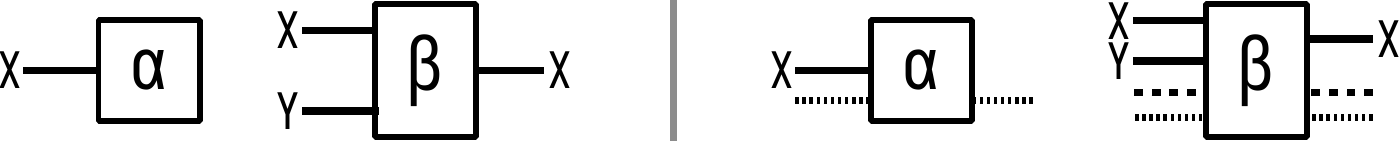}
    \caption{Left: "Monoidal graph" with arrows $\alpha : X \to \varepsilon$, $\beta : XY \to X$. Right: "Device graph" with the same underlying "monoidal graph" as on the left. $\alpha$ and $\beta$ share a device, but $\beta$ has an additional device. The devices of an action all appear in both the source and target, but need not occur in a specific order.}
    \label{fig:graphs}
  \end{figure}

  We can depict a "device graph" as on the right of \Cref{fig:graphs}. Patterned strings denote the devices, which need not occur in a particular order. In contrast to resource strings, device strings must \emph{thread through} the process: this enforces an order between dependent processes.

\begin{example} \label{ex:devmon}
  A "device graph" in which the underlying "monoidal graph" has an empty set of objects and set of arrows $\Sigma$ is precisely a "distribution" of $\Sigma$ in the sense of \Cref{defn:dist}, although we allow the set of devices to be infinite. In practice, a finite set of devices often suffices. For example, if the "device graph" has a finite number of arrows, then an infinite number of devices can be shown to be redundant.%
\end{example}

\begin{definition}
  Morphisms $f$ and $g$ in a "device graph" $C$ are said to be ""orthogonal"", denoted $f \perp_{C} g$ just when they share no devices, $\dev(f) \cap \dev(g) = \varnothing$. Note that any $f$ with no devices is orthogonal to every $g$. We write $f \perp g$ when $C$ is clear from context, and $f \nperp g$ when $f$ and $g$ are not orthogonal.
\end{definition}

A \emph{morphism of device graphs} must preserve "orthogonality",

\begin{definition}
  A ""morphism of device graphs"", $\alpha : C \to D$, comprises
  \begin{itemize}
    \item a morphism of underlying "monoidal graphs", $\alpha : |C| \to |D|$,
    \item such that for all arrows $f,g \in C$, if $f \mathbin{"\perp"_{C}} g$ then $\alpha(f) \mathbin{"\perp"_{D}} \alpha(g)$.
    \end{itemize}
  \end{definition}

\begin{proposition} \label{lem:devgcat}
  Device graphs and their morphisms form a category $\Dev$.
\end{proposition}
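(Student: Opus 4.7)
The plan is to verify the category axioms by leveraging the fact that morphisms of device graphs are morphisms of the underlying monoidal graphs satisfying an extra preservation property. Since $\MonGraph$ is already known to be a category, associativity and the identity laws will come for free, so the real content is confined to two checks: that identities and composites are themselves morphisms of device graphs, i.e., that they preserve orthogonality.

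First, I would define the identity on a device graph $C$ to be the identity morphism on the underlying monoidal graph $|C|$. Orthogonality preservation is then trivial: if $f \perp_C g$ then $\mathrm{id}(f) = f$ and $\mathrm{id}(g) = g$ share the same (empty) set of common devices, so $\mathrm{id}(f) \perp_C \mathrm{id}(g)$.

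Next, given $\alpha \colon C \to D$ and $\beta \colon D \to E$ in $\Dev$, I define their composite on underlying monoidal graphs as in $\MonGraph$. The only thing to verify is that $\beta \circ \alpha$ preserves orthogonality: if $f \perp_C g$, then by the preservation property for $\alpha$ we have $\alpha(f) \perp_D \alpha(g)$, and then by the preservation property for $\beta$ we get $\beta(\alpha(f)) \perp_E \beta(\alpha(g))$, as required.

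Finally, associativity of composition and the left/right identity laws hold since they hold in $\MonGraph$ and the data of a device graph morphism is a monoidal graph morphism with only an additional constraint, not additional structure. There is no genuine obstacle here; the proof is purely a matter of chaining implications, and the only step worth writing out explicitly is the transitive preservation of $\perp$ under composition.
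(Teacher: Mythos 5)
Your proposal is correct and follows essentially the same route as the paper's proof: composition and identities are inherited from $\MonGraph$, and the only substantive check is that orthogonality preservation is stable under composition, established by the same chaining of implications. The explicit remark that identities trivially preserve $\perp$ is a minor addition the paper leaves implicit.
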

\begin{proof}
  Composition of morphisms is given by composition of their underlying "morphisms of monoidal graphs", and identities by the identities on "monoidal graphs". It remains to check that the "orthogonality" condition is satisfied by a composite $\delta \circ \beta : \ef{G} \to \ef{H} \to \ef{K}$.  Indeed, if $f \mathbin{"\perp"_{\ef{G}}} g \Rightarrow \beta(f) \mathbin{"\perp"_{\ef{H}}} \beta(g)$ for all $f,g \in \ef{G}$ and $h \mathbin{"\perp"_{\ef{H}}} i \Rightarrow \delta(h) \mathbin{"\perp"_{\ef{K}}} \delta(i)$ for all $h,i \in \ef{H}$ then we have
  \begin{align*}
    &f \mathbin{"\perp"_{\ef{G}}} g \quad\Rightarrow\quad \beta(f) \mathbin{"\perp"_{\ef{H}}} \beta(g) \quad\Rightarrow\quad \delta(\beta(f)) \mathbin{"\perp"_{\ef{K}}} \delta(\beta(g)) \quad = \quad (\delta\circ \beta)(f) \mathbin{"\perp"_{\ef{K}}} (\delta \circ \beta)(g)
  \end{align*}for all $f,g \in \ef{G}$ as required.
\end{proof}
  
We now have the ingredients required to define \emph{effectful graphs},

\begin{definition}
  An ""effectful graph"" $\ef{G} : V \to C$ comprises
  \begin{itemize}
  \item a "monoidal graph" $V$,
  \item a "device graph" $C$ over the same set of objects as $V$,
  \item an identity-on-objects "morphism of monoidal graphs" $\ef{G} : V \to |C|$,
  \item such that $\dev(\ef{G}v) = \varnothing$ for all arrows $v$ in $V$.
  \end{itemize}
\end{definition}

We usually think of $V$ in terms of its image in $C$, and thus when we speak of a \emph{morphism in an "effectful graph"}, we are referring to the arrows of the "device graph" $C$.

\begin{example}
  In the "effectful graph" in \Cref{fig:running-ex}, the "monoidal graph" comprises solely the ``document'' generator, while the "device graph" includes all three generators, with the "morphism of monoidal graphs" simply including the ``document'' generator.
\end{example}

\begin{definition}
  A ""morphism of effectful graphs"" $(\alpha_0,\alpha) : (\ef{G} : V \to C) \to (\ef{H} : W \to D)$ comprises %
  \begin{itemize}
  \item a "morphism of monoidal graphs" $\alpha_0 : V \to W$, and
  \item a "morphism of device graphs" $\alpha : C \to D$,
  \item such that the square of "morphisms of monoidal graphs" commutes, $|\alpha| \circ \ef{G} = \ef{H} \circ \alpha_0$.
  \end{itemize}
\end{definition}

Since both $\MonGraph$ and $\Dev$ are categories, it follows easily that:

\begin{proposition}
"Effectful graphs" and their morphisms form a category $\EffGraph$.
\end{proposition}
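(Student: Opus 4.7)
The plan is to verify the three categorical requirements (well-defined composition, associativity, and identities) by reducing each to the corresponding fact already established for $\MonGraph$ and $\Dev$, since an effectful graph morphism is literally a pair $(\alpha_0,\alpha)$ of morphisms from those two categories constrained by a commuting-square condition.

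First I would define composition componentwise: given $(\alpha_0,\alpha) \colon (\ef{G}\colon V \to C) \to (\ef{H} \colon W \to D)$ and $(\beta_0,\beta) \colon (\ef{H} \colon W \to D) \to (\ef{K} \colon U \to E)$, set $(\beta_0,\beta) \circ (\alpha_0,\alpha) := (\beta_0 \circ \alpha_0,\, \beta \circ \alpha)$. The first component is a morphism in $\MonGraph$ and the second in $\Dev$ (using \Cref{lem:devgcat}), so the only non-trivial check is that the required square commutes for the composite. This follows immediately by pasting the two given commuting squares:
\[
|\beta \circ \alpha| \circ \ef{G} \;=\; |\beta| \circ |\alpha| \circ \ef{G} \;=\; |\beta| \circ \ef{H} \circ \alpha_0 \;=\; \ef{K} \circ \beta_0 \circ \alpha_0,
\]
where I use that the forgetful assignment $C \mapsto |C|$ from device graphs to monoidal graphs preserves composition (this is built into the definition of morphism of device graphs, whose underlying data is exactly a morphism of monoidal graphs).

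For identities, I would take $\mathrm{id}_{(\ef{G}\colon V \to C)} := (\mathrm{id}_V,\,\mathrm{id}_C)$, which trivially satisfies the square condition $|\mathrm{id}_C| \circ \ef{G} = \ef{G} \circ \mathrm{id}_V$. Associativity and the unit laws then hold on the nose, because they hold separately in each component: this reduces to associativity/unit laws in $\MonGraph$ and $\Dev$ applied to $\alpha_0,\beta_0,\gamma_0$ and to $\alpha,\beta,\gamma$ respectively.

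There is essentially no obstacle here; the statement is a routine bookkeeping check once \Cref{lem:devgcat} is in hand. The only point that requires a brief sentence in the write-up is the compatibility of the underlying functor $|\mathord{-}| \colon \Dev \to \MonGraph$ with composition, used implicitly when pasting the commuting squares. I would conclude with one line noting that the conditions defining an effectful graph (identity-on-objects, $\dev(\ef{G}v) = \varnothing$) are conditions on objects only and play no role in checking the category axioms on morphisms.
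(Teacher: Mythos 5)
Your proposal is correct and matches the paper's (implicit) argument: the paper simply notes that the result follows from $\MonGraph$ and $\Dev$ being categories, which is precisely your componentwise composition and identities together with the pasting of commuting squares. Your write-up just makes explicit the routine checks the paper leaves to the reader.
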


 Note that "effectful graphs" generalize the \emph{effectful polygraphs} of Sobociński and the third author \cite{roman2025}, precisely by allowing each action to have a different \emph{set} of devices, rather than all actions having a single, global device. The construction of effectful categories in the next section also refines that from \emph{effectful polygraphs} \cite{roman2025}, in that it builds-in extra independence equations between processes.

\subsection{Effectful categories}
"Effectful graphs" generate \emph{effectful categories}. An "effectful category" comprises a monoidal category $\mathbb{V}$, a premonoidal category $\C$, and an identity-on-objects premonoidal functor $\mathbb{V} \to \C$. In this paper, we work with \emph{strict} versions of these notions, since not only are the free such structures \emph{strict}, but coherence theorems exist which state that any non-strict (pre)monoidal category is equivalent to a strict one \cite{power97}.

Just as with "effectful graphs", the idea is that $\mathbb{V}$ is a category of pure actions (generated by a "monoidal graph"), $\C$ is a category of effectful actions (generated by a "device graph"), and the functor is thought of as an inclusion of $\mathbb{V}$ into $\C$. Let us build up to their definition.

\begin{definition} \label{defn:strictmon}
  A ""strict monoidal category"" is a category $\V$ equipped with
  \begin{itemize}
  \item a functor $\otimes : \V \times \V \to \V$, called the monoidal product,
  \item a distinguished object $I$, called the monoidal unit,
  \item such that $A \otimes (B \otimes C) = (A \otimes B) \otimes C$ and $A \otimes I = A = I \otimes A$,
  \item and such that the following equations hold for all morphisms $f,g,h$:
     \[f \otimes (g \otimes h) = (f \otimes g) \otimes h \\
      1_I \otimes f = f = f \otimes 1_I\]
  \end{itemize}
\end{definition}

"Premonoidal categories" refine "monoidal categories" by only requiring the monoidal product to be functorial separately in each variable: in particular, there is no operation combining morphisms in parallel.

\begin{definition} \label{defn:strictpre}%
  \AP A ""strict premonoidal category"" is a category, $\C$, equipped with
  functors $(A \ltimes -) : \C \to \C$ (the ``left-whiskering'' by A) and $(-
  \rtimes A) : \C \to \C$ (the ``right-whiskering'' by A), for every object $A$ of
  $\C$, and a distinguished object $I$ (the ``monoidal unit''), such that,
  \begin{itemize}
    \item $A \rtimes B = A \ltimes B$, allowing us to denote this object as $A \otimes B$,
    \item $A \otimes (B \otimes C) = (A \otimes B) \otimes C$ and $A \otimes I = A = I \otimes A$,
    \item and such that the following equations (\Cref{fig:premeq}) hold, stating coherence of left- and right-whiskering with each other and with the monoid structure,
      \begin{figure}[H]
        \vspace*{-5mm}
      \[ (A \ltimes f) \rtimes B = A \ltimes (f \rtimes B) \\ \qquad
        I \ltimes f = f = f \rtimes I \] \vspace*{-7mm}
      \[ (A \otimes B) \ltimes f = A \ltimes (B \ltimes f)  \\ \qquad
       f \rtimes (A \otimes B) = (f \rtimes A) \rtimes B \]
    \caption{Equations for left and right whiskerings in a strict premonoidal category.}
    \label{fig:premeq}
  \end{figure}
  \end{itemize}
\end{definition}

Note that every "monoidal category" is a "premonoidal category" in which partial applications of the monoidal product define the left- and right- whiskerings. The key equations that hold for every pair of morphisms in a "monoidal category", but not generally in a given "premonoidal category", are those expressing \emph{interchange} (i.e., \emph{commutation}) of morphisms:

\begin{definition}
  For morphisms $f : A \to B$ and $g : A' \to B'$ in a "premonoidal category" $\C$,
  \begin{itemize}
  \item we write $f ""\parallel"" g$ in case $(f \rtimes A')(B \ltimes g) = (A \ltimes g)(f \rtimes B')$,
  \item we say that $f$ and $g$ ""interchange"" in case $f \parallel g$ and $g \parallel f$,
  \item we say that $f$ and $g$ ""interfere"", written $f \lightning g$, in case they do not interchange,
  \item we say that $f$ is ""central"" in case it "interchanges" with every morphism of $\C$.
  \end{itemize}
\end{definition}

Graphically, interchange of morphisms is precisely the \emph{sliding} of morphisms past one another, as discussed for example in \Cref{fig:trace}.

\begin{definition}
  Let $\C$ and $\D$ be "strict premonoidal categories". A ""strict premonoidal functor"" is a functor $F : \C \to \D$ that
  \begin{itemize}
  \item maps objects as a monoid homomorphism, $F(A \otimes B) = F(A) \otimes F(B)$ and $F(I) = I$,
  \item and preserves whiskerings, $F(A \ltimes f) = F(A) \ltimes F(f)$ and $F(f \rtimes A) = F(f) \rtimes F(A)$.
  \end{itemize}
  When $\C$ and $\D$ are moreover "strict monoidal categories", we call this a ""strict monoidal functor"".
\end{definition}

Finally, we reach the definition of an "effectful category":

\begin{definition} \label{defn:eff}
  A (strict) ""effectful category"" $\efc{E}{\V}{\C}$ comprises
  \begin{itemize}
    \item a strict "monoidal category" $\V$,
    \item a strict "premonoidal category" $\C$ with $\obj{\C} = \obj{\V}$, and
    \item an identity-on-objects "strict premonoidal functor" $\efc{E}{\V}{\C}$,
    \item such that the image of $\efc{E}{\V}{\C}$ is central.
  \end{itemize}
\end{definition}

\begin{definition}
  A ""morphism of (strict) effectful categories"", $(\alpha_0,\alpha) : (\efc{E}{\V}{\C}) \to (\efc{F}{\W}{\D})$, comprises a "strict monoidal functor" $\alpha_0 : \V \to \W$ and a "strict premonoidal functor" $\alpha : \C \to \D$ such that $\alpha \circ \efC{E} = \efC{F} \circ \alpha_0$ in $\PreMon$ (eliding the inclusion of $\MonCat \hookrightarrow \PreMon$).
\end{definition}

From the fact that $\MonCat$ and $\Dev$ are categories, it follows easily that:

\begin{proposition}
"Effectful categories" and their morphisms form a category $\Eff$.
\end{proposition}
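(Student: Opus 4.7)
The plan is to mirror the earlier proof for $\EffGraph$: define composition and identities componentwise, and import the category laws from $\MonCat$ and $\PreMon$. The identity on an effectful category $\efc{E}{\V}{\C}$ is the pair $(\mathrm{id}_\V,\mathrm{id}_\C)$, each component trivially strict (pre)monoidal in its respective ambient category, with the commuting square $\mathrm{id}_\C \circ \efC{E} = \efC{E} \circ \mathrm{id}_\V$ immediate. Composition of $(\alpha_0,\alpha) : \efC{E} \to \efC{F}$ and $(\beta_0,\beta) : \efC{F} \to \efC{G}$ is defined pointwise, $(\beta_0,\beta) \circ (\alpha_0,\alpha) := (\beta_0 \circ \alpha_0,\ \beta \circ \alpha)$.

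The one thing to verify is that this composite is again a morphism of effectful categories. The first coordinate $\beta_0 \circ \alpha_0$ is strict monoidal because $\MonCat$ is already a category, and $\beta \circ \alpha$ is strict premonoidal because $\PreMon$ is already a category. Commutativity of the outer square follows by pasting the two given commuting squares:
\begin{equation*}
(\beta \circ \alpha) \circ \efC{E} = \beta \circ (\efC{F} \circ \alpha_0) = (\efC{G} \circ \beta_0) \circ \alpha_0 = \efC{G} \circ (\beta_0 \circ \alpha_0).
\end{equation*}

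Associativity and the unit laws in $\Eff$ then hold coordinatewise, inherited from the corresponding laws in $\MonCat$ and $\PreMon$. I expect no substantive obstacle: the argument is structurally identical to the proof given for $\EffGraph$, with $\MonGraph$ and the device-graph category replaced by $\MonCat$ and $\PreMon$. Equivalently, one may exhibit $\Eff$ as a subcategory of the arrow category of $\PreMon$, cut out by the objectwise conditions of \Cref{defn:eff} (identity-on-objects, central image) together with the requirement that the vertical component of a defining square be strict monoidal rather than merely strict premonoidal.
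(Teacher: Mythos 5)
Your proof is correct and matches the paper's (unstated but implied) argument: the paper dispatches this proposition as following easily from componentwise composition and identities, exactly the coordinatewise construction you spell out, with the pasted square being the only point needing a check. Your closing remark realizing $\Eff$ inside the arrow category of $\PreMon$ is a fine equivalent packaging, not a genuinely different route.
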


\subsection{Free effectful category over an effectful graph} \label{sec:free}

Our goal now is to define a functor $\freeeff{} : \EffGraph \to \Eff$. We will see in the next section that this functor has a right adjoint, and so truly constructs the \emph{free} "effectful category" on an "effectful graph". We begin by constructing the free "premonoidal category" over a "device graph".

\begin{definition} \label{defn:freepre}
  For a "device graph" $G$, we construct a "premonoidal category" $\mathcal{F}G$ on $G$ as follows. The underlying category has,
  \begin{itemize}
  \item set of objects $V_G^*$, the free monoid on $V_G$, whose elements are lists. We shall denote by $\otimes$ the concatenation of lists.
  \item set of morphisms $E_{\mathcal{F}G}/{\equiv}$, where $E_{\mathcal{F}G}$ is the set inductively defined by rules for identities, the whiskering of generating arrows, and composition, %
    \begin{mathpar}
\inferrule[id]{X \in V_G^*}{1_X : X \to X}

      \inferrule[whisk]{X, Y \in V_G^{*} \\ g : W \to Z \in E_G}{X \triangleright g \triangleleft Y : X \otimes W \otimes Y \to X \otimes Z \otimes Y}

      \inferrule[comp]{u : P \to Q \\ v : Q \to R}{u \comp v : P \to R}

    \end{mathpar}
  \item and $\equiv$ is the least congruence for $\comp$ generated by, %
    \[
      (u \comp v) \comp w = u {\comp} (v \comp w) \\ \qquad
      1_X \comp u = u = u \comp 1_Y %
    \]
  \end{itemize}
  whenever these are well typed, along with equations allowing (whiskerings of) orthogonal generating arrows to interchange, i.e. for every pair, $f \colon U \to V$ and $g \colon U' \to V'$, of arrows in $E_G$ that are orthogonal, $\dev[](f) \cap \dev[](g) = \varnothing$, and each triple of objects $X,Y,Z$ in $V_G^{*}$,
  \begin{flalign*}
        &(X \triangleright f \triangleleft Y \otimes U \otimes Z) \comp (X \otimes V \otimes Y \triangleright g \triangleleft Z)  = \\ &(X \otimes U' \otimes Y \triangleright g \triangleleft Z) \comp (X \triangleright f \triangleleft Y \otimes V' \otimes Z).
  \end{flalign*}

    It is immediate that this data forms a category. For the "premonoidal" structure, we define $I$ to be the empty list, left and right whiskerings act on objects by list concatenation (denoted $\otimes$), and their action on morphisms is defined inductively by
    \[(X \ltimes f) := \begin{cases}
      1_{X \otimes Y} & \text{if $f = 1_Y$,} \\
      (X \ltimes g) {\comp} (X \ltimes h) & \text{if $f = g {\comp} h$,} \\
      X \otimes Y \triangleright f' \triangleleft Z & \text{if $f = Y \triangleright f' \triangleleft Z$,}
    \end{cases}\]
    \[(f \rtimes X) := \begin{cases}
      1_{Y \otimes X} & \text{if $f = 1_Y$,} \\
      (g \rtimes X) {\comp} (h \rtimes X) & \text{if $f = g {\comp} h$,} \\
      Y \triangleright f' \triangleleft Z \otimes X & \text{if $f = Y \triangleright f' \triangleleft Z.$}
    \end{cases}\]

    It is easily verified that these mappings are well-defined, functorial, and satisfy the required coherence equations of \Cref{defn:strictpre}.
\end{definition}

The morphisms of $\mathcal{F}G$ are ""resourceful traces"" over $G$, and can be depicted intuitively as ``string diagrams'' (from an input boundary to an output boundary), built by combining whiskered generators of $G$ via the operations of premonoidal categories. Generators correspond to boxes as in \Cref{fig:graphs}, identity morphisms correspond to strings, composition is given by joining resource strings and any matching device strings, leading to a morphism with the union of the two sets of devices, and whiskering by juxtaposing strings with boxes. Whiskering by the empty list is simply depicted by drawing no whiskering string. We illustrate this in \Cref{fig:restraces}. The requirement that every device string appear at most once in every vertical slice through the diagram, as in \Cref{sec:maztrace}, is enforced by the fact that we have no operation for combining morphisms in parallel. %
The following lemma shows that interchange of generating arrows extends to the morphisms of $\mathcal{F}G$ via the device assignment described above.

\begin{figure}[h]
  \centering
    \includegraphics[width=0.8\textwidth]{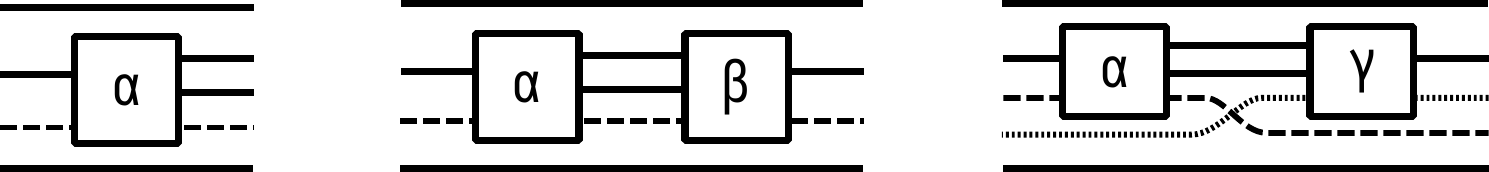}
    \caption{(Left to right): Whiskering of a generating arrow by a resource, composition of morphisms with a matching device, composition of morphisms with distinct devices. In each composition, we obtain a new morphism whose devices are the union of the components. Types of the resource strings are omitted here for clarity.}
    \label{fig:restraces}
  \end{figure}

\begin{restatable}[]{lemma}{interchange} \label{lem:ic}
  Consider the inductive extension of the device assignment of a "device graph" $G$ to morphisms of $\mathcal{F}G$ (\Cref{defn:freepre}),
    \[
      \dev[{\mathcal{F}G}](f) := \begin{cases}
        \varnothing & \text{if $f = 1_A$,}\\
      \dev[G](f') & \text{if $f = X \triangleright f' \triangleleft Y$,}\\
      \dev[{\mathcal{F}G}](g) \cup \dev[{\mathcal{F}G}](h) & \text{if $f = g {\comp} h$.}
      \end{cases}
    \]
Let $f,g$ be two morphisms of $\mathcal{F}G$. If $f$ and $g$ are "orthogonal", $\dev[\mathcal{F}G](f) \cap \dev[\mathcal{F}G](g) = \varnothing$, then $f \parallel g$ and $g \parallel f$.
\end{restatable}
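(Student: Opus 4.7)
The plan is a double induction on the inductive structure of morphisms given in \Cref{defn:freepre} (identity, whiskered generator, composition). It suffices to prove $f \parallel g$; then $g \parallel f$ follows by the symmetric argument with the roles of $f$ and $g$ swapped. Before starting the induction, one should note that the device assignment $\dev_{\mathcal{F}G}$ descends to equivalence classes: the category axioms preserve device sets on the nose, and both sides of the interchange axiom involve exactly one occurrence of each of the generators $f$ and $g$, hence have the same device set $\dev(f)\cup\dev(g)$.

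For the outer induction on $f$, the base case $f = 1_A$ is immediate, since by the inductive definition of whiskering $1_A \rtimes A' = 1_{A\otimes A'}$ and $1_A \rtimes B' = 1_{A\otimes B'}$, so both sides of the interchange equation reduce to $A \ltimes g$. For $f = h \comp k$ with $h \colon A \to C$ and $k \colon C \to B$, the definition of $\dev_{\mathcal{F}G}$ gives $\dev(h),\dev(k) \subseteq \dev(f)$, so $h$ and $k$ are each orthogonal to $g$; the inductive hypothesis then yields $h \parallel g$ and $k \parallel g$. Using functoriality of the right whiskering $(-\rtimes A')$, one rewrites $(f \rtimes A')(B \ltimes g)$ as $(h\rtimes A')(k \rtimes A')(B \ltimes g)$, slides $k$ past $g$, then slides $h$ past $g$, to arrive at $(A \ltimes g)(f \rtimes B')$. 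The remaining case is $f = X \triangleright f' \triangleleft Y$ with $f'$ a generator; here we do an inner induction on $g$, whose identity and composition sub-cases are handled exactly as for $f$, using functoriality of $(A \ltimes -)$ instead.

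The crux is the sub-case where both $f$ and $g$ are whiskered generators: $f = X \triangleright f' \triangleleft Y$ with $f' \colon U \to V$, and $g = X' \triangleright g' \triangleleft Y'$ with $g' \colon U' \to V'$. Orthogonality of $f$ and $g$ gives $\dev_G(f') \cap \dev_G(g') = \varnothing$, so the interchange axiom of $\mathcal{F}G$ applies to the pair $(f',g')$. The plan is to unfold $f \rtimes A'$, $B \ltimes g$, $A \ltimes g$, and $f \rtimes B'$ using the inductive definition of whiskering on a whiskered generator, which only modifies the outer whiskering strings. Choosing the axiom parameters $\hat X := X$, $\hat Y := Y \otimes X'$, $\hat Z := Y'$, one checks that $(f \rtimes A')(B \ltimes g)$ and $(A \ltimes g)(f \rtimes B')$ are precisely the two sides of the axiom instance for $(f',g')$ at these parameters.

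The main obstacle I anticipate is not conceptual but purely notational: the whiskering identities of the form $X \triangleright f' \triangleleft (Y \otimes W) = (X \triangleright f' \triangleleft Y) \rtimes W$ need to be invoked carefully to match the axiom instance. These identities, together with the coherence equations of \Cref{defn:strictpre}, are routine consequences of the inductive clauses defining $\ltimes$ and $\rtimes$, but they must be unpacked consistently so that the list concatenations $X$, $Y \otimes X'$, $Y'$ line up as required by the axiom. Once this bookkeeping is done, the proof is concluded by the symmetric argument for $g \parallel f$.
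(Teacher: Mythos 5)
Your proposal is correct and follows essentially the same route as the paper's proof: a structural double induction in which identities are handled by the premonoidal equations, the whiskered-generator/whiskered-generator case invokes the interchange axiom imposed in $\mathcal{F}G$ (your parameter choice $X$, $Y\otimes X'$, $Y'$ is exactly the needed instance), composites are handled by functoriality of the whiskerings together with the inductive hypothesis, and $g \parallel f$ follows by symmetry. Your preliminary check that $\dev_{\mathcal{F}G}$ is well defined on equivalence classes is a sound extra detail that the paper leaves implicit.
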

\begin{proof}
  See \Cref{app}.
\end{proof}

\begin{example}
  As noted in \Cref{ex:devmon}, a "device graph" $G$ with empty set of objects is precisely a "distribution" of an alphabet. In this case, $\mathcal{F}G$ has a single object, the empty list, and hence is a monoid (with trivial whiskerings). This monoid is isomorphic to the trace monoid over the corresponding distribution.
\end{example}

\begin{lemma} \label{lem:freepf}
  Each "morphism of device graphs", $\alpha : G \to H$, induces  a "strict premonoidal functor", $\mathcal{F}(\alpha) : \mathcal{F}G \to \mathcal{F}H$, defined as follows. The action on objects is the lifting of $\alpha_V$ to lists, $\alpha_V^* : V_G^* \to V_H^*$. The action on morphisms is given by,
  \[
  \mathcal{F}(\alpha)(f) := \begin{cases}
    1_{\mathcal{F}(\alpha)(X)} & \text{if $f = 1_X$}, \\
    \mathcal{F}(\alpha)(X) \triangleright \alpha_E(f') \triangleleft \mathcal{F}(\alpha)(Y) & \text{if $f = X \triangleright f' \triangleleft Y$} \\
    \mathcal{F}(α)(g) \comp \mathcal{F}(α)(h) & \text{if $f = g \comp h$.}\\
  \end{cases}
\]
and this action is well-defined.
\end{lemma}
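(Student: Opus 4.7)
The plan is to verify three things in succession: that the assignment $\mathcal{F}(\alpha)$ descends to the quotient $E_{\mathcal{F}G}/{\equiv}$; that it is a functor; and that it preserves the strict premonoidal structure. Throughout, the strategy is structural induction on morphisms, mirroring the three clauses (identity, whiskered generator, composite) used in \Cref{defn:freepre} and in the definition of $\mathcal{F}(\alpha)$ itself.

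For well-definedness, the category-axiom clauses of $\equiv$ (associativity and unit laws for $\comp$) are preserved trivially because $\mathcal{F}(\alpha)$ is defined to commute with $\comp$ and to send identities to identities. The crux is the interchange clause: for orthogonal generators $f : U \to V$ and $g : U' \to V'$ in $G$ and lists $X, Y, Z \in V_G^{*}$, applying $\mathcal{F}(\alpha)$ to both sides of the interchange equation yields an interchange equation in $\mathcal{F}H$ between the generators $\alpha_E(f)$ and $\alpha_E(g)$, whiskered by the lists $\alpha_V^{*}(X), \alpha_V^{*}(Y), \alpha_V^{*}(Z)$. Since $\alpha$ is a morphism of device graphs, orthogonality of $f,g$ in $G$ implies orthogonality of $\alpha_E(f), \alpha_E(g)$ in $H$, so the required equality holds in $\mathcal{F}H$ by the corresponding clause in its defining congruence.

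For the premonoidal structure, preservation of identities and composition is by definition. On objects, $\alpha_V^{*}$ is a monoid homomorphism by the universal property of the free monoid on $V_G$, mapping the empty list to the empty list. Preservation of whiskerings, $\mathcal{F}(\alpha)(X \ltimes f) = \alpha_V^{*}(X) \ltimes \mathcal{F}(\alpha)(f)$, follows by induction on $f$: in each case of the definition of $X \ltimes f$ (identity, composite, whiskered generator), applying $\mathcal{F}(\alpha)$ yields, after unfolding, the corresponding case of the definition of $\alpha_V^{*}(X) \ltimes \mathcal{F}(\alpha)(f)$, using that $\alpha_V^{*}$ is a monoid homomorphism to match the object parts in the whiskered-generator case. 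The argument for right whiskerings is symmetric.

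The principal obstacle is the interchange-equation step in well-definedness. This is the only point where the definition of a morphism of device graphs, as opposed to a bare morphism of monoidal graphs, is used: dropping the orthogonality-preservation requirement would allow $\alpha$ to identify a pair of independent generators in $G$ with a pair of dependent generators in $H$, so that $\mathcal{F}(\alpha)$ would send an equation in $\mathcal{F}G$ to a pair of distinct morphisms in $\mathcal{F}H$, and hence fail to be well-defined on the quotient.
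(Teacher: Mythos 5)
Your proposal is correct and follows essentially the same route as the paper's (much terser) proof: well-definedness reduces to the category axioms being respected automatically plus the interchange clause of $\equiv$, which maps to a valid interchange instance in $\mathcal{F}H$ precisely because $\alpha$ preserves orthogonality, while preservation of the premonoidal structure is the monoid-homomorphism property of $\alpha_V^{*}$ together with an induction on $f$ for whiskerings. You simply spell out the details the paper leaves implicit, including correctly identifying the orthogonality-preservation requirement as the only non-trivial point.
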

\begin{proof}
  Well-definedness follows from the fact $\mathcal{F}G$ and $\mathcal{F}H$ are categories and that $\alpha$ preserves orthogonality by definition. The action on objects is a morphism of monoids, so to see that $\mathcal{F}(\alpha)$ is a "strict premonoidal functor", it remains to check that it preserves whiskerings: this follows by a straightfoward induction on $f$.
\end{proof}

It is clear that the mapping $\mathcal{F}$ preserves composition and identities, and thus,

\begin{proposition}
  The assignments of \Cref{defn:freepre} and \Cref{lem:freepf} define a functor $\mathcal{F} : \Dev \to \mathsf{PreMon}$.
\end{proposition}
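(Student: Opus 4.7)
The plan is to verify the two functoriality equations, \emph{functoriality on identities} $\mathscr{F}(1_G) = 1_{\mathscr{F}G}$ and \emph{functoriality on composition} $\mathscr{F}(\beta \circ \alpha) = \mathscr{F}(\beta) \circ \mathscr{F}(\alpha)$, directly from the definitions. In both cases the action on objects is handled immediately by the fact that the free monoid construction $(-)^{*} : \mathsf{Set} \to \mathsf{Mon}$ is itself functorial: $(1_G)_V^{*} = 1_{V_G^{*}}$ and $(\beta_V \circ \alpha_V)^{*} = \beta_V^{*} \circ \alpha_V^{*}$. The substance of the proof is to check that the actions on morphisms agree, which I will do by induction on the inductive presentation of morphisms in $\mathscr{F}G$ given in \Cref{defn:freepre}.

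First I would verify preservation of identities. For the action on morphisms, induct on $f : X \to Y$ in $\mathscr{F}G$. In the base cases, $\mathscr{F}(1_G)(1_X) = 1_X$ and $\mathscr{F}(1_G)(X \triangleright f' \triangleleft Y) = X \triangleright f' \triangleleft Y$ directly from \Cref{lem:freepf}, since $(1_G)_V$ and $(1_G)_E$ are identity functions. The composition case $f = g \comp h$ then follows from the inductive hypothesis. Next I would verify preservation of composition by a completely parallel induction: the identity case gives $1_{(\beta_V \circ \alpha_V)^{*}(X)}$ on both sides; the whiskering case gives $(\beta_V \circ \alpha_V)^{*}(X) \triangleright (\beta_E \circ \alpha_E)(f') \triangleleft (\beta_V \circ \alpha_V)^{*}(Y)$ on both sides; and the composition case follows from the inductive hypothesis together with the fact that $\mathscr{F}(\beta) \circ \mathscr{F}(\alpha)$ preserves $\comp$ (as composites of functors of underlying categories).

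The only subtlety, and the one place care is required, is that the inductive clauses operate on raw syntactic representatives of morphisms in $\mathscr{F}G$, while the morphisms themselves are equivalence classes under the congruence $\equiv$ generated by the category axioms and the interchange equations for orthogonal generators. However, well-definedness of $\mathscr{F}(\alpha)$ on $\equiv$-classes has already been established in \Cref{lem:freepf}, and the same argument (that $\alpha$ preserves orthogonality, so orthogonal pairs in $G$ are sent to orthogonal pairs in $H$, hence the interchange equations are preserved) ensures the inductive arguments above are independent of the chosen representative. No further step is expected to present any obstacle; the proof is a routine structural induction enabled entirely by the design of \Cref{defn:freepre} and \Cref{lem:freepf}.
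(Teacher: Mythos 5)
Your proof is correct and matches the paper's approach: the paper simply asserts that preservation of identities and composition is clear, and your routine structural induction (with well-definedness on $\equiv$-classes already secured by \Cref{lem:freepf}) is exactly the argument being elided. Nothing further is needed.
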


To define the free "effectful category" over an "effectful graph", we simply combine the construction of the "free monoidal category" on a "monoidal graph" (\Cref{defn:freemon}) with the construction of the "free premonoidal category" on a "device graph" (\Cref{defn:freepre}), as follows.

\begin{definition} \label{defn:freeeff}
  Let $\efg{G}{V}{C}$ be an "effectful graph". This generates the following "effectful category" $\freeeff{\ef{G}} :  \freemon{V} \to \mathcal{F}C$, which \Cref{prop:freeeff} checks is well-defined.

  \begin{itemize}
  \item The "monoidal category" is the "free monoidal category" $\freemon{V}$ on the "monoidal graph" $V$,
  \item the "premonoidal category" $\mathcal{F}{C}$ is the "free premonoidal category" on the "device graph" $C$,%
  \item$\freeeff{\ef{G}} :  \freemon{V} \to \mathcal{F}C$ is taken to be identity-on-objects, and is defined by structural induction on its arguments, following \Cref{defn:freemon}. The interesting case is, for $g_1 : A_1 \to B_1$, $g_2 : A_2 \to B_2$, \[\freeeff{\ef{G}}(g_1 \otimes g_2) := (\freeeff{\ef{G}}(g_1) \rtimes A_2)(B_1 \ltimes \freeeff{\ef{G}}(g_2)) = (A_1 \ltimes \freeeff{\ef{G}}(g_2))(\freeeff{\ef{G}}(g_1) \rtimes B_2).\]
    For this equality to hold, we must have that $\freeeff{\ef{G}}(g_1)$ and $\freeeff{\ef{G}}(g_2)$ are "orthogonal" for all $g_1$ and $g_2$ in $\freemon{V}$, since then we can conclude from \Cref{lem:ic} that they "interchange". It suffices to show that $\freeeff{\ef{G}}(g)$ has no devices for all $g$ in $\freemon{V}$, which follows by structural induction (c.f. \Cref{defn:freemon}).
  \end{itemize}
\end{definition}

\begin{proposition} \label{prop:freeeff}
  \Cref{defn:freeeff} has the structure of an "effectful category".
\end{proposition}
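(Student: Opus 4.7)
The plan is to verify the three conditions of \Cref{defn:eff} for the data given in \Cref{defn:freeeff}: (i) $\freemon{V}$ is a strict monoidal category, (ii) $\mathcal{F}C$ is a strict premonoidal category on the same objects, and (iii) $\freeeff{\ef{G}}$ is an identity-on-objects strict premonoidal functor with central image. Items (i) and (ii) are immediate from the corresponding free constructions (\Cref{defn:freemon} and \Cref{defn:freepre}), and both categories have the same set of objects $V_V^{*}=V_{|C|}^{*}$ because $\ef{G}\colon V\to|C|$ is identity-on-objects by definition of an effectful graph. The work is therefore concentrated in showing that $\freeeff{\ef{G}}$ is a well-defined strict premonoidal functor and that its image is central.

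The key step, already flagged in the statement of \Cref{defn:freeeff}, is to establish the invariant
\[\dev[\mathcal{F}C]\bigl(\freeeff{\ef{G}}(g)\bigr)=\varnothing\qquad\text{for every morphism } g \text{ of } \freemon{V},\]
by structural induction on $g$ as constructed in \Cref{defn:freemon}. The base case for generators holds because $\ef{G}$ sends generators of $V$ into the image that must have empty device sets by the defining clause of an "effectful graph". Identities have empty device sets by the first clause of the device assignment in \Cref{lem:ic}, and compositions and whiskerings preserve emptiness because device sets are combined by union and whiskering does not alter devices. The tensor case unfolds via the definition to a composition of whiskerings of images of smaller morphisms, and the inductive hypothesis applies.

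Once the invariant is in hand, \Cref{lem:ic} does essentially all the remaining work. Any two morphisms in the image of $\freeeff{\ef{G}}$ are orthogonal and therefore "interchange", which simultaneously ensures (a) that the two expressions in the tensor clause of \Cref{defn:freeeff} agree, so the assignment is unambiguous, and (b) that the functoriality and bifunctoriality equations of $\freemon{V}$ are respected, so the assignment descends to morphisms of $\freemon{V}$. Moreover, each morphism in the image has empty device set, hence is orthogonal to \emph{every} morphism of $\mathcal{F}C$, and therefore interchanges with every such morphism by \Cref{lem:ic}; this is exactly centrality of the image.

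The remaining premonoidal functor axioms for $\freeeff{\ef{G}}$ are routine: it acts as the identity monoid homomorphism on objects, preserves identities and composition by construction, and preserves whiskerings by a short induction using that whiskering in a strict monoidal category is a partial application of $\otimes$. The main obstacle is the tensor case of well-definedness, which is where the symmetric form of the defining equation matters; but this is precisely the point at which the invariant above together with \Cref{lem:ic} applies, so once that invariant is in place the proof closes.
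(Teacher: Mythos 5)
Your proof is correct and follows essentially the same route as the paper: show by structural induction that every morphism in the image of $\freeeff{\ef{G}}$ has empty device set (the paper places this step inside \Cref{defn:freeeff}), then invoke \Cref{lem:ic} to obtain both the consistency of the tensor clause and centrality of the image, with preservation of whiskerings checked via the strict premonoidal equations. No gaps; your write-up merely makes the well-definedness bookkeeping more explicit than the paper's proof of \Cref{prop:freeeff} itself.
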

\begin{proof}
  By the definitions of "device graph", "free monoidal category" and "free premonoidal category", we have that $\obj{(\mathcal{F}C)} = \obj{(\mathcal{F}_{\smallotimes}V)} := \obj{V}^{*}$. We must check that $\freeeff{\ef{G}} :  \freemon{V} \to \mathcal{F}C$ is a "strict premonoidal functor" with central image. $\freeeff{\ef{G}}$ is identity-on-objects and so a monoid homomorphism. Furthermore it preserves left-whiskering since, \[\freeeff{\ef{G}}(A \otimes f) := (\freeeff{\ef{G}}(\id{A}) \rtimes B)(A \ltimes \freeeff{\ef{G}}(f)) = A \rtimes \freeeff{\ef{G}}(f),\] using the equations of "strict premonoidal categories", and preserves right-whiskerings in a similar fashion. It follows from \Cref{lem:ic} that the image of $\freeeff{\ef{G}}$ is "central" since the definition of "device graph" asks that $\dev[\ef{G}](v) := \varnothing$ for all $v \in V$, and so by \Cref{defn:freepre}, $\dev(\freeeff{\ef{G}}(g)) = \varnothing$ for all $g \in \freemon{V}$ .
\end{proof}

\begin{proposition} \label{prop:freeeffm}
  Let $(\alpha_0,\alpha) : (\efg{G}{V}{C}) \to (\efg{H}{W}{D})$ be a "morphism of effectful graphs". This generates a "morphism of effectful categories" $\freeeff{(\alpha_0,\alpha)} : \freeeff{\ef{G}} \to \freeeff{\ef{H}}$ defined as follows:
  \begin{itemize}
  \item the underlying "strict monoidal functor" is $\mathcal{F}_{\smallotimes}\alpha_0 : \mathcal{F}_{\smallotimes}V \to \mathcal{F}_{\smallotimes}W$, %
  \item the underlying "strict premonoidal functor" $\mathcal{F}\alpha : \mathcal{F}C \to \mathcal{F}D$ is that given by \Cref{lem:freepf},
  \item and the required square commutes by functoriality.
  \end{itemize}
\end{proposition}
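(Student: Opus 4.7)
The plan is to verify that the data specified in the statement defines a morphism of effectful categories, namely: (i) that $\mathcal{F}_{\smallotimes}\alpha_0$ is a strict monoidal functor, (ii) that $\mathcal{F}\alpha$ is a strict premonoidal functor, and (iii) that the square $\mathcal{F}\alpha \circ \freeeff{\ef{G}} = \freeeff{\ef{H}} \circ \mathcal{F}_{\smallotimes}\alpha_0$ commutes in $\PreMon$. Items (i) and (ii) are essentially immediate: (i) follows from the functoriality of the free strict monoidal category construction on $\MonGraph$, while (ii) is exactly \Cref{lem:freepf}, since $\alpha \colon C \to D$ is a morphism of device graphs. The content of the proposition is therefore the commutativity asserted in (iii).

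To verify (iii), I would first check equality on objects. Since both $\freeeff{\ef{G}}$ and $\freeeff{\ef{H}}$ are identity-on-objects, and both $\mathcal{F}\alpha$ and $\mathcal{F}_{\smallotimes}\alpha_0$ act on objects by lifting the vertex component of $\alpha_0$ to lists, the two composites agree on objects.

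For the action on morphisms, I would proceed by structural induction on $f \in \freemon{V}$, following the inductive presentation of the free strict monoidal category. The pivotal base case is that of a single generating arrow $v$ of $V$: here the desired equation reduces to $\mathcal{F}\alpha(\ef{G}(v)) = \ef{H}(\alpha_0(v))$, which is precisely the compatibility condition $|\alpha| \circ \ef{G} = \ef{H} \circ \alpha_0$ built into the definition of a morphism of effectful graphs. The cases of identity, whiskering, and composition then follow routinely by (pre)monoidal functoriality of both $\mathcal{F}\alpha$ and $\mathcal{F}_{\smallotimes}\alpha_0$ together with the inductive hypothesis.

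The main obstacle, such as it is, will be the tensor product case $f = g_1 \otimes g_2$, since $\freemon{V}$ has a primitive parallel product whereas $\mathcal{F}C$ and $\mathcal{F}D$ do not. Here I would unfold $\freeeff{\ef{G}}(g_1 \otimes g_2)$ into the whiskered composite prescribed in \Cref{defn:freeeff}, push $\mathcal{F}\alpha$ through each whiskering and each composite by its strict premonoidal functoriality, and apply the inductive hypothesis on $g_1$ and $g_2$. That the two composites arising on the two sides of the square coincide depends on \Cref{lem:ic}: images of pure morphisms have empty device sets and hence interchange, so that the alternative presentation of $\otimes$ as a whiskered composite in \Cref{defn:freeeff} is well-defined and is preserved by $\mathcal{F}\alpha$.
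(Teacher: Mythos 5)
Your proposal is correct and follows essentially the route the paper intends: the paper simply asserts that the square commutes ``by functoriality'' (leaving the verification as an easy check), and your structural induction on the inductive presentation of $\freemon{V}$ — with the generator case reducing to the compatibility square $|\alpha| \circ \ef{G} = \ef{H} \circ \alpha_0$ of a morphism of effectful graphs, and the tensor case handled via the whiskered-composite definition in \Cref{defn:freeeff} together with \Cref{lem:ic} and the strict premonoidal functoriality of $\mathcal{F}\alpha$ — is exactly the verification being elided. No gap; your write-up just makes the omitted routine argument explicit.
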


It is easily checked that this assignment on morphisms preserves identities and composition.

\begin{proposition}
  The assignments of \Cref{prop:freeeff} and \Cref{prop:freeeffm} extend to a functor $\freeeff : \EffGraph \to \Eff$.
\end{proposition}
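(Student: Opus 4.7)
The plan is to verify the two functoriality axioms --- preservation of identities and preservation of composition --- by reducing each to the analogous property of the two component constructions, $\freemon{-}$ on "monoidal graphs" and $\mathcal{F}$ on "device graphs". \Cref{prop:freeeff} and \Cref{prop:freeeffm} already establish that $\freeeff$ sends "effectful graphs" to "effectful categories" and "morphisms of effectful graphs" to "morphisms of effectful categories", so the object and morphism assignments are well-defined; only the two equational conditions of functoriality remain.

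First I would check preservation of identities. The identity on an "effectful graph" $\efg{G}{V}{C}$ is the pair $(1_V, 1_C)$ of identity morphisms of the underlying "monoidal graph" and "device graph". Applying $\freeeff$ gives $(\freemon{1_V}, \mathcal{F}(1_C))$, and since $\freemon{-} : \MonGraph \to \MonCat$ is a functor (by the standard construction of the free monoidal category) and $\mathcal{F} : \Dev \to \PreMon$ is a functor by the preceding proposition, both components collapse to identities, yielding the identity morphism on $\freeeff{\ef{G}}$.

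Next I would check preservation of composition, which again proceeds componentwise. Given composable "morphisms of effectful graphs" $(\alpha_0,\alpha) : \ef{G} \to \ef{H}$ and $(\beta_0,\beta) : \ef{H} \to \ef{K}$, their composite in $\EffGraph$ is $(\beta_0 \circ \alpha_0, \beta \circ \alpha)$. Applying $\freeeff$ and invoking functoriality of each component separately shows this agrees with $\freeeff{(\beta_0,\beta)} \circ \freeeff{(\alpha_0,\alpha)}$, since composition in $\Eff$ is componentwise. The only point requiring a brief check is the commutation square demanded of the composite "morphism of effectful categories", but this is obtained by horizontally pasting the two commutation squares supplied by the factors.

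I do not anticipate any substantive obstacle: the real content has already been absorbed into the functoriality of $\freemon{-}$ and $\mathcal{F}$ together with \Cref{prop:freeeffm}. This final proposition is essentially the bookkeeping step that assembles these facts into a single functor $\freeeff : \EffGraph \to \Eff$.
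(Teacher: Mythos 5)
Your proof is correct and matches the paper's approach: the paper simply remarks that the assignment "preserves identities and composition," which is exactly the componentwise check via functoriality of $\freemon{-}$ and $\mathcal{F}$ that you spell out. Your extra note about pasting the commuting squares for composites is fine and consistent with how composition in $\Eff$ is defined.
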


\section{Interference: from effectful categories to effectful graphs} \label{sec:interference}
In this section we show that every "effectful category" has an underlying "effectful graph", given by a generalization of the clique construction of \Cref{prop:subposet}. Rather than taking the elements of a monoid as the vertices, we take the vertices to be the morphisms of a "premonoidal category".

\begin{definition}
  The ""interference graph"" of a "premonoidal category" $\C$, denoted by ${\lightning}\C$, is the graph whose vertices are the arrows of $\C$, with an edge $(f,g)$ if and only if $f$ and $g$ "interfere" in $\C$ (denoted $f ""\lightning"" g$): that is, they do not "interchange".
\end{definition}

\begin{definition} \label{defn:underlyingdev}
  Let $\C$ be a "strict premonoidal category". We define its underlying "device graph" $\mathcal{U}(\C)$ as follows:
  \begin{itemize}
  \item For the underlying "monoidal graph" we take $V_{\mathcal{U}(\C)} := \obj{\C}$, and for every pair of lists $X_1, ..., X_n$ and $Y_1, ..., Y_m$ of objects of $\obj{\C}$, we take a set of arrows
        \[\mathcal{U}(\C)(X_1, ..., X_n; Y_1, ..., Y_m) := \C(X_1 \otimes ... \otimes X_n; Y_1 \otimes ... \otimes Y_m).\]
    That is, there is an arrow $X_1, \ldots, X_n \to Y_1, \ldots, Y_m$ in $\mathcal{U}(\C)$ for every morphism $X_1 \otimes \ldots \otimes X_n \to Y_1 \otimes \ldots \otimes Y_m$ in $\C$. Since each $X_i$ and $Y_j$ is an object of $\C$, whose set of objects already forms a monoid with operation denoted $\otimes$, a morphism $f : X_1 \otimes ... \otimes X_n \to Y_1 \otimes ... \otimes Y_m$ is included both in $\mathcal{U}(\C)(X_1 \otimes ... \otimes X_n; Y_1 \otimes ... \otimes Y_m)$ and $\mathcal{U}(\C)(X_1, ..., X_n; Y_1, ..., Y_m)$. The complete set of arrows $E_{\mathcal{U}(\C)}$ is the coproduct of all such sets, over every pair of lists of objects in $\C$.

  \item The set of devices, $\mathsf{cliques}({\lightning}\C)$, is given by the set of non-trivial maximal cliques in the "interference graph", ${\lightning}\C$.
    A clique is \emph{trivial} just when it is a singleton. A clique is maximal just in case any morphism that interferes with every morphism of the clique is in the clique. The device assignment, $\dev : E_{\mathcal{U}(\C)} \to \mathscr{P}(\mathsf{cliques}({\lightning}\C))$, assigns an arrow to the subset of non-trivial maximal cliques to which it belongs.
  \end{itemize}
\end{definition}

\begin{lemma}\label{lem:reflect-interference}
  For $F : \X \to \Y$ a "strict premonoidal functor", we have that
  $f \parallel g$ implies $F(f) \parallel F(g)$, and that
  $F(f) \lightning F(g)$ implies $f \lightning g$.
\end{lemma}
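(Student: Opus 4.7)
The plan is to unpack both implications directly from the definition of $\parallel$ and the structure-preservation properties of a strict premonoidal functor, with the second implication being the contrapositive of (an instance of) the first.

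For the first implication, suppose $f \colon A \to B$ and $g \colon A' \to B'$ with $f \parallel g$. By definition this means the equation
\[(f \rtimes A') \comp (B \ltimes g) = (A \ltimes g) \comp (f \rtimes B')\]
holds in $\X$. I would apply $F$ to both sides and use that $F$ is a functor (so it preserves composition) together with the whiskering equations $F(h \rtimes C) = F(h) \rtimes F(C)$ and $F(C \ltimes h) = F(C) \ltimes F(h)$ from the definition of strict premonoidal functor. This rewrites both sides into the form required for $F(f) \parallel F(g)$, since $F(f) \colon F(A) \to F(B)$ and $F(g) \colon F(A') \to F(B')$.

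For the second implication, I would argue by contraposition: if $f$ and $g$ interchange in $\X$, i.e. $f \parallel g$ and $g \parallel f$, then applying the first implication twice yields $F(f) \parallel F(g)$ and $F(g) \parallel F(f)$, so $F(f)$ and $F(g)$ interchange in $\Y$. Equivalently, $F(f) \lightning F(g)$ forces $f \lightning g$.

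The argument is essentially bookkeeping, so there is no real obstacle; the only thing to be careful about is that the defining equation for $\parallel$ is asymmetric in $f$ and $g$, so one must separately invoke the first implication for both orderings in order to conclude the interference statement. This asymmetry is what makes $\lightning$ a well-defined symmetric relation (used in \Cref{defn:underlyingdev}) and is exactly what the contrapositive argument needs.
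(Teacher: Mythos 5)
Your proof is correct and is exactly the argument the paper has in mind (the paper simply records it as ``Immediate''): apply $F$ to the defining equation for $\parallel$, use functoriality and preservation of whiskerings, and obtain the interference statement as the contrapositive, invoking the first part in both orders. Nothing is missing.
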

\begin{proof}
Immediate.
\end{proof}

\begin{proposition} \label{lem:underlyingdevm}
  For every "strict premonoidal functor" $F : \C \to \D$, there is a "morphism of device graphs" $\mathcal{U}(F) : \mathcal{U}(\C) \to \mathcal{U}(\D)$ given by the action of $F$.%
\end{proposition}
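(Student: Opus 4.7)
The plan is to define $\mathcal{U}(F)$ by transporting the action of $F$ to the underlying device graphs, and then reduce preservation of orthogonality to the preservation of interchange supplied by \Cref{lem:reflect-interference}.

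For the underlying morphism of monoidal graphs, I would send an object $X$ of $\mathcal{U}(\C)$ (equivalently, an object of $\C$) to $F(X)$, extending pointwise to lists; this is well-defined because $F$ strictly preserves $\otimes$ and $I$ on objects. An element of $\mathcal{U}(\C)(X_1,\ldots,X_n; Y_1,\ldots,Y_m)$ is by definition a morphism $X_1 \otimes \cdots \otimes X_n \to Y_1 \otimes \cdots \otimes Y_m$ in $\C$, which $F$ sends to a morphism $F(X_1) \otimes \cdots \otimes F(X_n) \to F(Y_1) \otimes \cdots \otimes F(Y_m)$ in $\D$, hence to an arrow in $\mathcal{U}(\D)(F(X_1),\ldots,F(X_n); F(Y_1),\ldots,F(Y_m))$. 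Source and target lists are preserved by construction, so this assignment constitutes a \emph{morphism of monoidal graphs} between $|\mathcal{U}(\C)|$ and $|\mathcal{U}(\D)|$.

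To check preservation of orthogonality, I would first unfold the clique-based description of $\dev$. By Zorn's lemma, any clique in the interference graph ${\lightning}\C$ extends to a maximal one; hence for distinct arrows $f, g$ of $\mathcal{U}(\C)$, $\dev(f) \cap \dev(g) \neq \varnothing$ precisely when $\{f, g\}$ is itself an edge of ${\lightning}\C$, i.e. when $f \lightning g$. In other words, $f \perp_{\mathcal{U}(\C)} g$ is equivalent to $f$ and $g$ interchanging in $\C$, namely $f \parallel g$ and $g \parallel f$. Applying the preservation half of \Cref{lem:reflect-interference} in each direction yields $F(f) \parallel F(g)$ and $F(g) \parallel F(f)$ in $\D$, so $F(f)$ and $F(g)$ cannot share any non-trivial maximal clique in ${\lightning}\D$; this gives $F(f) \perp_{\mathcal{U}(\D)} F(g)$, as required.

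The step I expect to be the main obstacle is this translation between the clique-theoretic description of $\dev$ and the first-order relation $\lightning$ on morphisms; once the equivalence ``sharing a device'' $\Leftrightarrow$ ``being connected in ${\lightning}\C$'' is in place, the argument is a direct application of \Cref{lem:reflect-interference}.
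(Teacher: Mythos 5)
Correct, and essentially the paper's own argument read in the contrapositive direction: the paper shows that a shared non-trivial maximal clique of $F(f)$ and $F(g)$ gives $F(f) \lightning F(g)$, hence $f \lightning g$ by \Cref{lem:reflect-interference}, hence a shared clique for $f$ and $g$ --- which is exactly your equivalence ``shared device $\Leftrightarrow$ interference'', with the Zorn-style extension of the edge $\{f,g\}$ to a maximal clique left implicit. Your forward reading tacitly assumes $F(f) \neq F(g)$ as morphisms of $\D$ when you reapply the equivalence there, but the paper's step ``$F(f), F(g) \in C$, hence $F(f) \lightning F(g)$'' rests on the same tacit assumption, so your proposal is at parity with the paper's proof.
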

\begin{proof}
  That this defines a morphism of the underlying "monoidal graphs" follows from the fact that a functor preserves sources and targets.
  We show that "orthogonality" of morphisms is preserved by proving the contrapositive, $\mathcal{U}(F)(f) \nperp \mathcal{U}(F)(g) \implies f \nperp g$. Suppose that $C \in \dev[\mathcal{U}(\D)](\mathcal{U}(F)(f)) \cap \dev[\mathcal{U}(\D)](\mathcal{U}(F)(g))$ for some $C \in \mathsf{cliques}({\lightning}\D)$. Then $\mathcal{U}(F)(f) = F(f) \in C$ and $\mathcal{U}(F)(g) = F(g) \in C$, and we have $F(f) \lightning F(g)$. Then \Cref{lem:reflect-interference} gives $f \lightning g$, which means $f,g \in C'$ for some $C' \in \mathsf{cliques}({\lightning}\C)$, in which case $C' \in \dev[\mathcal{U}(\C)](f) \cap \dev[\mathcal{U}(\C)](g)$, that is, $f \nperp g$, which is what we wanted to show. %
\end{proof}

\begin{lemma}
  The assignments of \Cref{defn:underlyingdev,lem:underlyingdevm} extend to a functor $\mathcal{U} : \PreMon \to \Dev$.
\end{lemma}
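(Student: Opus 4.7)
The plan is to verify the two functoriality axioms: preservation of identities and preservation of composition. The heavy lifting has already been done in \Cref{defn:underlyingdev} (which produces the device graph $\mathcal{U}(\C)$ from a strict premonoidal category $\C$) and \Cref{lem:underlyingdevm} (which produces a morphism of device graphs $\mathcal{U}(F)$ from a strict premonoidal functor $F$). All that remains is to check the equations $\mathcal{U}(\id_{\C}) = \id_{\mathcal{U}(\C)}$ and $\mathcal{U}(G \circ F) = \mathcal{U}(G) \circ \mathcal{U}(F)$.

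First I would unpack the morphism $\mathcal{U}(F) : \mathcal{U}(\C) \to \mathcal{U}(\D)$. By \Cref{lem:underlyingdevm}, its action is literally given by that of $F$: on vertices $V_{\mathcal{U}(\C)} = \obj{\C}$ it is $F_{\obj{}}$, and on arrows, an element $f \in \mathcal{U}(\C)(X_1,\ldots,X_n; Y_1,\ldots,Y_m)$ is a morphism $X_1 \otimes \cdots \otimes X_n \to Y_1 \otimes \cdots \otimes Y_m$ of $\C$, which $\mathcal{U}(F)$ sends to $F(f) \in \mathcal{U}(\D)(F X_1, \ldots, F X_n ; F Y_1, \ldots, F Y_m)$; this lies in the correct hom-set because $F$ preserves the monoidal product on objects. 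Consequently, when $F = \id_{\C}$, both components act as identities, giving $\mathcal{U}(\id_{\C}) = \id_{\mathcal{U}(\C)}$.

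For composition, given $F : \C \to \D$ and $G : \D \to \E$ of $\PreMon$, on the underlying monoidal graph we have
\[(\mathcal{U}(G) \circ \mathcal{U}(F))_V = G_{\obj{}} \circ F_{\obj{}} = (G \circ F)_{\obj{}} = \mathcal{U}(G \circ F)_V,\]
and on arrows, for any $f \in E_{\mathcal{U}(\C)}$,
\[(\mathcal{U}(G) \circ \mathcal{U}(F))_E(f) = G(F(f)) = (G \circ F)(f) = \mathcal{U}(G \circ F)_E(f).\]
The equalities hold strictly because $\mathcal{U}$ passes $F$'s action through without modification; one only needs to note that the reinterpretation of a morphism $X_1 \otimes \cdots \otimes X_n \to Y_1 \otimes \cdots \otimes Y_m$ of $\C$ as an arrow of $\mathcal{U}(\C)$ with source list $X_1, \ldots, X_n$ and target list $Y_1, \ldots, Y_m$ is compatible with $F$'s preservation of $\otimes$ on objects.

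There is no real obstacle here: the orthogonality-preservation condition that distinguishes morphisms of device graphs from morphisms of monoidal graphs is already established componentwise by \Cref{lem:underlyingdevm}, and its stability under composition follows once from \Cref{lem:devgcat}. The only minor subtlety to flag is the distinction between the list $(X_1, \ldots, X_n)$ and the object $X_1 \otimes \cdots \otimes X_n$ in the description of arrows of $\mathcal{U}(\C)$: one must be explicit that $\mathcal{U}(F)$ acts on an arrow indexed by the list $(X_1, \ldots, X_n)$ by producing the arrow indexed by $(F X_1, \ldots, F X_n)$, whose underlying $\E$-morphism is $F(f)$. With this bookkeeping noted, functoriality is a straightforward equational check.
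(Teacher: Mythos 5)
Your verification is correct and is exactly the routine check the paper leaves implicit: the lemma is stated without proof, since $\mathcal{U}(F)$ acts literally by $F$, so preservation of identities and composition is immediate, with orthogonality-preservation already supplied componentwise by \Cref{lem:underlyingdevm} and stable under composition by \Cref{lem:devgcat}. Your remark on the bookkeeping between the list $(X_1,\ldots,X_n)$ and the object $X_1\otimes\cdots\otimes X_n$ (using that $F$ is a monoid homomorphism on objects) is the only point worth making explicit, and you make it.
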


We can now define a functor $\mathcal{U} : \Eff \to \EffGraph$. We begin with the action on "effectful categories".

\begin{proposition} \label{defn:ueff}
  Let $\efc{E}{\V}{\C}$ be an "effectful category". Then $\mathscr{U}(\efC{E}) : \mathcal{U}_{\smallotimes}(\V) \to \mathcal{U}(\C)$ is the "effectful graph" with
  \begin{itemize}
  \item "monoidal graph" $\mathcal{U}_{\smallotimes}(\V)$ (\Cref{defn:umon}),
  \item "device graph" $\mathcal{U}(\C)$,
  \item "morphism of monoidal graphs" $\mathscr{U}(\efC{E}) : \mathcal{U}_{\smallotimes}(\V) \to | \mathcal{U}(\C) |$ defined to be identity-on-objects, and $f \mapsto \efC{E}(f)$ on arrows.
  \end{itemize}
  To see that this is well-defined, first note that by definition, $\mathcal{U}(\C)$ has the same set of objects as $\mathcal{U}_{\smallotimes}(\V)$. It remains to show that $\dev(\mathscr{U}(\efC{E})(v)) = \varnothing$ for all arrows $v$ in $\mathcal{U}_{\smallotimes}(\V)$. By definition of $\mathscr{U}(\efC{E})$, we have $\mathscr{U}(\efC{E})(v) = \efC{E}(v)$. Also, by definition $\efc{E}{\V}{\C}$ has central image, and so $\efC{E}(v)$ must constitute a trivial clique in ${\lightning}\C$.%
\end{proposition}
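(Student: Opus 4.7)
The plan is to verify the three conditions defining an "effectful graph" for the stated data: that $\mathcal{U}_{\smallotimes}(\V)$ and $\mathcal{U}(\C)$ share the same object set, that $\mathscr{U}(\efC{E})$ is an identity-on-objects "morphism of monoidal graphs", and that $\dev(\mathscr{U}(\efC{E})(v)) = \varnothing$ for every arrow $v$ in $\mathcal{U}_{\smallotimes}(\V)$.

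First, I would observe that by Definition~\ref{defn:eff}, any "effectful category" satisfies $\obj{\V} = \obj{\C}$, and both the underlying "monoidal graph" of a monoidal category and the underlying "device graph" of a premonoidal category take their vertex set to be the object set of the input category. Hence $V_{\mathcal{U}_{\smallotimes}(\V)} = \obj{\V} = \obj{\C} = V_{\mathcal{U}(\C)}$ as required.

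Next, I would verify that $\mathscr{U}(\efC{E}) : \mathcal{U}_{\smallotimes}(\V) \to |\mathcal{U}(\C)|$ is an identity-on-objects "morphism of monoidal graphs". By Definition~\ref{defn:underlyingdev}, the arrow set $\mathcal{U}(\C)(X_1,\ldots,X_n; Y_1,\ldots,Y_m)$ coincides with $\C(X_1 \otimes \cdots \otimes X_n, Y_1 \otimes \cdots \otimes Y_m)$, and similarly for $\mathcal{U}_{\smallotimes}(\V)$. Since $\efC{E}$ is itself identity on objects as a "strict premonoidal functor", the action $f \mapsto \efC{E}(f)$ respects source and target lists essentially by definition.

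The substantive condition is that $\dev(\efC{E}(v)) = \varnothing$ for every arrow $v$. By the central-image hypothesis of Definition~\ref{defn:eff}, $\efC{E}(v)$ "interchanges" with every morphism of $\C$, and hence does not "interfere" with any morphism of $\C$. Consequently $\efC{E}(v)$ is incident to no edges in the "interference graph" ${\lightning}\C$, and so belongs only to the trivial singleton clique $\{\efC{E}(v)\}$ and to no non-trivial maximal clique. Since $\dev$ records precisely membership in non-trivial maximal cliques, $\dev(\efC{E}(v)) = \varnothing$. I do not anticipate any real obstacle here: the proof is a direct unfolding of definitions, and the only conceptual point worth flagging is the correspondence between centrality of $\efC{E}(v)$ in $\C$ and the vanishing of its devices in $\mathcal{U}(\C)$, which directly generalises the clique/commutativity relationship established in Proposition~\ref{prop:subposet}.
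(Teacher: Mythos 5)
Your proposal is correct and follows the paper's own argument: both reduce the claim to unfolding the definitions of $\mathcal{U}_{\smallotimes}$ and $\mathcal{U}$, with the only substantive step being that centrality of the image of $\efC{E}$ forces each $\efC{E}(v)$ to lie in no non-trivial maximal clique of the interference graph ${\lightning}\C$, hence to have empty device set. Your slightly more explicit check that $f \mapsto \efC{E}(f)$ respects source and target lists is a harmless elaboration of what the paper leaves implicit.
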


It follows straightforwardly by  combining \Cref{defn:umon} and \Cref{defn:ueff}, that every "morphism of effectful categories" has an underlying "morphism of effectful graphs", and furthermore that these assignments extend to a functor:

\begin{proposition} \label{lem:ueffm}
  Let $\efc{G}{\V}{\C}$ and $\efc{H}{\W}{\D}$ be "effectful categories" and $(\alpha_0, \alpha) : (\efc{G}{\V}{\C}) \to (\efc{H}{\W}{\D})$ a "morphism of effectful categories". Then, there is an underlying "morphism of effectful graphs", $\mathscr{U}(\alpha_0, \alpha) := (\mathcal{U}_{\smallotimes}(\alpha_0), \mathcal{U}(\alpha))$, where $\mathcal{U}_{\smallotimes}$ is the "underlying morphism of monoidal graphs".
\end{proposition}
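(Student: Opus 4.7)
The plan is to verify the three conditions in the definition of a \emph{morphism of effectful graphs}, each of which is inherited essentially for free from results already established. First, I would invoke \Cref{lem:underlyingdevm} to obtain the "morphism of device graphs" component $\mathcal{U}(\alpha) : \mathcal{U}(\C) \to \mathcal{U}(\D)$; this is where the only substantive check, preservation of "orthogonality", has already been carried out via \Cref{lem:reflect-interference} and the clique construction. Second, I would appeal to the analogous \Cref{defn:umon} to supply the "morphism of monoidal graphs" $\mathcal{U}_{\smallotimes}(\alpha_0) : \mathcal{U}_{\smallotimes}(\V) \to \mathcal{U}_{\smallotimes}(\W)$.

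The only remaining condition is commutativity of the square of "morphisms of monoidal graphs"
\[ |\mathcal{U}(\alpha)| \circ \mathscr{U}(\efC{G}) = \mathscr{U}(\efC{H}) \circ \mathcal{U}_{\smallotimes}(\alpha_0). \]
By \Cref{defn:ueff}, both $\mathscr{U}(\efC{G})$ and $\mathscr{U}(\efC{H})$ are identity-on-objects and act on arrows as $\efC{G}$ and $\efC{H}$ respectively. The components $\mathcal{U}(\alpha)$ and $\mathcal{U}_{\smallotimes}(\alpha_0)$ likewise act on arrows as $\alpha$ and $\alpha_0$. Evaluating both sides on an arrow $f$ of $\mathcal{U}_{\smallotimes}(\V)$, the required equation reduces to $\alpha(\efC{G}(f)) = \efC{H}(\alpha_0(f))$, which is precisely the commuting square that is part of the data of the given "morphism of effectful categories" $(\alpha_0, \alpha)$. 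The object-level equation is automatic, since all four morphisms act as the identity on objects (using that $\efC{G}$ and $\efC{H}$ are identity-on-objects, which forces $\alpha$ and $\alpha_0$ to agree on objects).

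For the functoriality claim made in the surrounding text, preservation of identities and composition by $\mathscr{U}$ follows componentwise from the functoriality of $\mathcal{U} : \PreMon \to \Dev$ and of $\mathcal{U}_{\smallotimes} : \MonCat \to \MonGraph$. No genuine obstacle arises in this proof: all of the conceptual work lives in the preceding \Cref{lem:underlyingdevm} (together with its unseen analogue for $\mathcal{U}_{\smallotimes}$). The present statement is a routine assembly step that stitches those ingredients together with the commuting square coming from the hypothesis.
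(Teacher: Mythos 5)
Your proof is correct and matches the paper's (implicit) argument: the paper offers no explicit proof, asserting only that the result "follows straightforwardly by combining" the constructions of $\mathcal{U}_{\smallotimes}$ and $\mathscr{U}$ on effectful categories, and your assembly — device-graph component from \Cref{lem:underlyingdevm}, monoidal-graph component from \Cref{defn:umon}, and the commuting square reduced on arrows to the square $\alpha \circ \efC{G} = \efC{H} \circ \alpha_0$ given in the data of the morphism of effectful categories — is exactly that routine check, with the object-level agreement of $\alpha$ and $\alpha_0$ correctly noted.
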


\begin{proposition}
  The assignments of \Cref{defn:ueff,lem:ueffm} extend to a functor\\
  $\mathscr{U} : \Eff \to \EffGraph.$
\end{proposition}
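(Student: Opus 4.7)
The plan is to reduce functoriality of $\mathscr{U}$ to the functoriality of its components $\mathcal{U}_{\smallotimes} : \MonCat \to \MonGraph$ and $\mathcal{U} : \PreMon \to \Dev$, both of which are already in hand. The only things to verify are preservation of identities and composition, together with the fact that the resulting pair really is a morphism of effectful graphs; the latter is already granted by \Cref{lem:ueffm}.

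First I would unpack the definition of composition in $\Eff$: a "morphism of effectful categories" is a pair $(\alpha_0, \alpha)$ consisting of a "strict monoidal functor" and a "strict premonoidal functor" making a square in $\PreMon$ commute, and composition of such pairs is componentwise composition in $\MonCat$ and $\PreMon$ respectively, with the composite square obtained by horizontally pasting the two commuting squares of the factors. Identities are likewise given by the pair of identity (pre)monoidal functors.

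Next, for preservation of identities, I would apply $\mathscr{U}$ to $(\id_{\V}, \id_{\C})$ to obtain $(\mathcal{U}_{\smallotimes}(\id_{\V}), \mathcal{U}(\id_{\C}))$, which by functoriality of $\mathcal{U}_{\smallotimes}$ and $\mathcal{U}$ equals $(\id_{\mathcal{U}_{\smallotimes}(\V)}, \id_{\mathcal{U}(\C)})$, the identity on $\mathscr{U}(\efC{E})$ in $\EffGraph$. For preservation of composition of two composable morphisms $(\alpha_0, \alpha)$ and $(\beta_0, \beta)$, I would compute
\[\mathscr{U}\bigl((\beta_0,\beta) \circ (\alpha_0,\alpha)\bigr) = \bigl(\mathcal{U}_{\smallotimes}(\beta_0 \circ \alpha_0), \mathcal{U}(\beta \circ \alpha)\bigr) = \bigl(\mathcal{U}_{\smallotimes}(\beta_0) \circ \mathcal{U}_{\smallotimes}(\alpha_0), \mathcal{U}(\beta) \circ \mathcal{U}(\alpha)\bigr),\]
which is $\mathscr{U}(\beta_0,\beta) \circ \mathscr{U}(\alpha_0,\alpha)$ by definition of composition in $\EffGraph$.

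The main obstacle here is essentially trivial: all substantive content has been discharged by \Cref{lem:underlyingdevm} (preservation of orthogonality by $\mathcal{U}$) and by \Cref{defn:ueff} (verifying that the image of $\efC{E}$ yields trivial cliques, so that the device condition on "effectful graphs" holds). The only mild care needed is in checking that the composite commuting square for $\mathscr{U}(\beta_0,\beta) \circ \mathscr{U}(\alpha_0,\alpha)$ really is the image under $\mathscr{U}$ of the composite square — but this follows from functoriality of $\mathcal{U}_{\smallotimes}$ and $\mathcal{U}$ applied to the pasted square in $\PreMon$, using the inclusion $\MonCat \hookrightarrow \PreMon$ exactly as in the definition of morphism of "effectful categories".
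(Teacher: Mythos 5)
Your proposal is correct and matches the paper's (implicit) argument: the paper states this proposition without proof, regarding it as the routine componentwise verification you spell out, with all substantive content already discharged in \Cref{defn:ueff} and \Cref{lem:ueffm} and functoriality inherited from $\mathcal{U}_{\smallotimes}$ and $\mathcal{U}$. Your check that the composite commuting square is preserved under $\mathscr{U}$ is exactly the only point needing mention, and you handle it correctly.
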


Finally, we show that our free and forgetful functors form an adjunction. Proofs can be found in Appendix \ref{app}.

\begin{restatable}[]{theorem}{adj} \label{thm:adj}
  The free construction of "premonoidal categories" over a "device graph" (\Cref{defn:freepre}) is left adjoint to forming the underlying "device graph" of a "premonoidal category" (\Cref{defn:underlyingdev}). That is, $\mathcal{F} \dashv \mathcal{U} : \PreMon \leftrightarrows \Dev.$
  \end{restatable}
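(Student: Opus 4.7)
The plan is to establish the adjunction by proving the universal property of $\mathcal{F}G$: every morphism of device graphs $\alpha : G \to \mathcal{U}\C$ extends uniquely to a strict premonoidal functor $\hat\alpha : \mathcal{F}G \to \C$. First I define the unit $\eta_G : G \to \mathcal{U}\mathcal{F}G$ as the identity on objects, sending each generating arrow $f$ to the corresponding whiskered generator $I \triangleright f \triangleleft I$ in $\mathcal{F}G$. This is a morphism of device graphs because, by \Cref{lem:ic} and the inductive device assignment given there, the $\mathcal{U}\mathcal{F}G$-orthogonality of two such generators coincides with their $G$-orthogonality.

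Given $\alpha : G \to \mathcal{U}\C$, I define $\hat\alpha : \mathcal{F}G \to \C$ by structural induction following \Cref{defn:freepre}: on objects, $\hat\alpha(X_1 \otimes \cdots \otimes X_n) := \alpha(X_1) \otimes \cdots \otimes \alpha(X_n)$; and on arrows, $\hat\alpha(1_X) := 1_{\hat\alpha(X)}$, $\hat\alpha(X \triangleright f \triangleleft Y) := (\hat\alpha(X) \ltimes \alpha(f)) \rtimes \hat\alpha(Y)$, and $\hat\alpha(u \comp v) := \hat\alpha(u) \comp \hat\alpha(v)$. The main obstacle is verifying that $\hat\alpha$ descends to the quotient imposed by the interchange equations. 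Suppose $f, g$ are orthogonal generators in $G$. Because $\alpha$ preserves orthogonality, $\alpha(f) \perp_{\mathcal{U}\C} \alpha(g)$, meaning they share no non-trivial maximal clique of $\lightning\C$. If $\alpha(f) \lightning \alpha(g)$ held in $\C$, then $\{\alpha(f), \alpha(g)\}$ would be a clique in $\lightning\C$ extending to a non-trivial maximal clique containing both, contradicting orthogonality. Thus $\alpha(f)$ and $\alpha(g)$ interchange in $\C$, which supplies exactly the equation needed for $\hat\alpha$ to be well-defined on the quotient, and this property lifts from generators to arbitrary whiskerings using \Cref{lem:ic} together with the fact that $\hat\alpha$ preserves $\dev$ on generators.

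The remaining checks are routine inductions: $\hat\alpha$ is a functor because associativity and unit laws are inherited from $\C$; it is strict premonoidal because the whiskerings in $\C$ satisfy the coherence equations of \Cref{defn:strictpre}, allowing the inductive cases for $(X \ltimes -)$ and $(- \rtimes Y)$ to match the corresponding clauses in \Cref{defn:freepre}; and $\mathcal{U}(\hat\alpha) \circ \eta_G = \alpha$ holds by construction. Uniqueness is immediate: any strict premonoidal functor $F : \mathcal{F}G \to \C$ satisfying $\mathcal{U}(F) \circ \eta_G = \alpha$ must coincide with $\hat\alpha$ on generators, and hence on all morphisms of $\mathcal{F}G$ by preservation of identities, composition, and whiskerings. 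Finally, naturality of the resulting bijection $\PreMon(\mathcal{F}G, \C) \cong \Dev(G, \mathcal{U}\C)$ follows directly from the functoriality of $\mathcal{F}$ and $\mathcal{U}$ established earlier in this section, completing the adjunction.
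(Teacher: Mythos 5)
Your route differs from the paper's: you argue via the universal property of $\mathcal{F}G$ (a unit together with unique extension of each device-graph morphism $\alpha : G \to \mathcal{U}\C$ to a strict premonoidal functor), whereas the paper constructs the unit and counit explicitly and verifies naturality and the triangle identities. That decomposition is perfectly legitimate and equivalent, and your uniqueness and naturality arguments are routine and fine. The problem is the well-definedness of the transpose $\hat\alpha$, which is the real content of the theorem, and there your argument has a genuine gap.

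The equations imposed in \Cref{defn:freepre} for an orthogonal pair of generators $f,g$ are parametrised by a triple $X,Y,Z$ of objects, with $Y$ inserted \emph{between} $f$ and $g$. For $\hat\alpha$ to respect them you need, in $\C$, not only $\alpha(f)\parallel\alpha(g)$ and $\alpha(g)\parallel\alpha(f)$ (which your maximal-clique argument correctly extracts from $\alpha(f)\perp_{\mathcal{U}\C}\alpha(g)$), but the versions with a middle context, e.g.\ $(\alpha(f)\rtimes Y)\parallel\alpha(g)$ for every object $Y$. From a single interchange $u\parallel v$ the premonoidal axioms only yield the outer whiskerings $(Z\ltimes u)\parallel(v\rtimes W)$, obtained by applying the whiskering functors to the interchange equation; they give no way to insert an object between $u$ and $v$, and indeed quotienting a free premonoidal category on two endo-generators by only the two adjacent interchange equations produces a premonoidal category in which the \emph{distant} interchange fails. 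Your appeal to \Cref{lem:ic} does not close this: that lemma concerns $\mathcal{F}G$ with its inductive device assignment, where the contextual interchange equations hold by construction; it says nothing about the target $\C$, whose only device structure is the clique-based one of \Cref{defn:underlyingdev}, and the phrase that $\hat\alpha$ preserves $\dev$ on generators is not meaningful across these two different assignments. (The same conflation appears in your justification that $\eta_G$ preserves orthogonality, though there the conclusion is recoverable by the clique argument, which is how the paper proceeds.) So the key step, as written, does not go through; the paper buries the corresponding burden in the claim that the counit $\varepsilon_\C$ is \emph{straightforwardly} well defined, but your proof in any case needs an explicit argument that orthogonality in $\mathcal{U}\C$ validates the interchange equations in \emph{all} whiskering contexts, and the argument you give is not it.
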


\begin{restatable}[]{corollary}{adjeff}  \label{thm:adjeff}
  The adjunction of \Cref{thm:adj} extends to an adjunction between "effectful graphs" and "effectful categories". That is, $\mathscr{F} \dashv \mathscr{U} : \Eff \leftrightarrows \EffGraph.$
\end{restatable}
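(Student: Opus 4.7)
The plan is to lift the adjunction of \Cref{thm:adj} componentwise, alongside the analogous adjunction $\mathcal{F}_{\smallotimes} \dashv \mathcal{U}_{\smallotimes} : \MonCat \leftrightarrows \MonGraph$ between monoidal graphs and strict monoidal categories, and to check that the compatibility square transports correctly. Fix an "effectful graph" $\mathcal{G} : V \to C$ and an "effectful category" $\mathscr{E} : \V \to \C$. A morphism $(\alpha_0, \alpha) : \freeeff{\mathcal{G}} \to \mathscr{E}$ consists of a "strict monoidal functor" $\alpha_0 : \freemon{V} \to \V$, a "strict premonoidal functor" $\alpha : \mathcal{F}C \to \C$, and the equation $\alpha \circ \freeeff{\mathcal{G}} = \mathscr{E} \circ \alpha_0$. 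Taking componentwise transposes under the two underlying adjunctions produces a "morphism of monoidal graphs" $\beta_0 : V \to \mathcal{U}_{\smallotimes}(\V)$ and a "morphism of device graphs" $\beta : C \to \mathcal{U}(\C)$; I would then show that $(\beta_0, \beta)$ assembles into a "morphism of effectful graphs" $\mathcal{G} \to \mathscr{U}(\mathscr{E})$.

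The only nontrivial verification is the compatibility square $|\beta| \circ \mathcal{G} = \mathscr{U}(\mathscr{E}) \circ \beta_0$ in $\MonGraph$. This is an equation of monoidal graph morphisms, so it suffices to check it at an arbitrary generating arrow $v \in V$. By \Cref{defn:freeeff}, $\freeeff{\mathcal{G}}$ sends $v$ to the arrow $\mathcal{G}(v) \in C$ viewed as a generator of $\mathcal{F}C$, and the units of the two underlying adjunctions are identity on generators, which forces $\alpha(\mathcal{G}(v)) = \beta(\mathcal{G}(v))$ and $\alpha_0(v) = \beta_0(v)$. Evaluating the effectful-category square at $v$ therefore yields
\[(|\beta| \circ \mathcal{G})(v) \;=\; \alpha(\freeeff{\mathcal{G}}(v)) \;=\; \mathscr{E}(\alpha_0(v)) \;=\; (\mathscr{U}(\mathscr{E}) \circ \beta_0)(v),\]
which is precisely the effectful-graph square at $v$.

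For the inverse direction, componentwise transposition of a "morphism of effectful graphs" $(\beta_0, \beta)$ yields $\alpha_0$ and $\alpha$ such that $\alpha \circ \freeeff{\mathcal{G}}$ and $\mathscr{E} \circ \alpha_0$ agree on generators by the same calculation; since both composites are strict premonoidal functors out of $\freemon{V}$, and the morphisms of $\freemon{V}$ are built from generators by composition and whiskering, this agreement extends to the whole of $\freemon{V}$, giving the effectful-category square. Naturality in both arguments is inherited componentwise from the two underlying adjunctions. The main obstacle is simply orchestrating this compatibility verification: although the two adjunctions involve different units, both are identity on generators, and it is \Cref{defn:freeeff}'s prescription that $\freeeff{\mathcal{G}}$ acts as $\mathcal{G}$ on generators that makes a single computation on generators match the effectful-category and effectful-graph squares.
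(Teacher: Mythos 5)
Your proposal is correct and follows essentially the same route as the paper: both lift the adjunction of \Cref{thm:adj} together with the adjunction $\mathcal{F}_{\smallotimes} \dashv \mathcal{U}_{\smallotimes}$ between monoidal graphs and monoidal categories componentwise, the paper by assembling the units and counits and checking the triangle identities componentwise, you by transposing hom-sets componentwise and verifying the compatibility squares on generators. The only cosmetic imprecision is that the units are not literally the identity on generators (they send a generator $f$ to its empty-whiskered inclusion $\triangleright f \triangleleft$ in the appropriate coproduct component of the underlying graph), but this bookkeeping does not affect your calculation.
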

\section{Commuting tensor product of free effectful categories} \label{sec:commutingtensor}

In their abstract investigation of the notion of \emph{commutativity}, Garner and López Franco deduce the existence of a \emph{commuting tensor product} of effectful categories \cite[\S 9.4]{commutativity}. In this section, we show how our construction of free "effectful categories" from "effectful graphs" gives rise to a simple, concrete construction of this commuting tensor product.

Roughly speaking, given two free "effectful categories" generated by "effectful graphs" over the same "monoidal graph" of chosen pure actions, their commuting tensor product is given by taking the free "effectful category" over a sort of ``disjoint union'' of the two underlying "effectful graphs". In particular, the set of devices is a disjoint union, and so actions from one graph are forced to interchange with those from the other. However, the set of resources is left unchanged, and so the actions from each graph may share resources when they are composed into resourceful traces, as in \Cref{fig:running-ex} (see also \Cref{ex:printer}, below).

\begin{construction} \label{defn:ctg}
   Given "effectful graphs" $\efg{G}{W}{C}$ and $\efg{H}{W}{D}$ over the same "monoidal graph" $W$, we construct their ""commuting tensor product"" $\efg{G \odot H}{W}{C \odot D}$ as follows. The "monoidal graph" of pure morphisms is simply $W$. The "device graph" $C \odot D$ has,
  \begin{itemize}
  \item objects $V_{C\odot D} := V_W = V_C = V_D$,
  \item arrows $E_{C \odot D} := E_C +_{E_W} E_D$, i.e., the following pushout of sets,
    \[
    \begin{tikzcd}
      E_W \ar[r,"\ef{H}_E"] \ar[d,"\ef{G}_E"'] & E_D \ar[d,"R^{C,D}_E"] \\
      E_C \ar[r,"L^{C,D}_E"'] & E_C +_{E_W} E_D
      \arrow["\lrcorner"{anchor=north, pos=-0.15, rotate=180}, draw=none, from=2-2, to=1-1]
    \end{tikzcd}
    \]
  \item devices $\dv{C \odot D} := \dv{C} + \dv{D}$,
  \item device assignment $\dev[{C\odot D}] : E_C +_{E_W} E_D \to \mathcal{P}(\dv{C \odot D})$ the unique map induced by the universal property of the pushout via the assignments $\dev[C]$ and $\dev[D]$,
    \[
    \begin{tikzcd}
      E_W \ar[r,"\ef{H}_E"] \ar[d,"\ef{G}_E"'] & E_D \ar[r,"\mathsf{dev}_D"] \ar[d,"R^{C,D}_E"] & \mathcal{P}(\dv{D}) \ar[d,"{\mathcal{P}(\mathsf{inr})}"] \\
      E_C \ar[d,"\mathsf{dev}_C"'] \ar[r,"L^{C,D}_E"'] & E_C +_{E_W} E_D \ar[rd,dashed,"\mathsf{dev}_{C \odot D}" description] & \mathcal{P}(\dv{C} + \dv{D}) \ar[d,equal] \\
      \mathcal{P}(\dv{C}) \ar[r,"{\mathcal{P}(\mathsf{inl})}"'] & \mathcal{P}(\dv{C} + \dv{D}) \ar[r,equal] & \mathcal{P}(\dv{C \odot D})
    \end{tikzcd}
  \]
  \item  $\efg{G \odot H}{W}{C \odot D}$ is identity-on-objects and $L_E^{C,D} \circ \ef{G}_E ~ (= R_E^{C,D} \circ \ef{H}_E)$ on arrows.%
  \end{itemize}
\end{construction}

\begin{example} \label{ex:printer}
  A simple example is given by our printer example from \Cref{sec:intro}, reproduced in \Cref{fig:commtensorex} (left), with the chosen monoidal graph of pure actions $W$ made explicit: in this case, it comprises the only pure action, namely the ``document'' generator. The commuting tensor product of this "effectful graph" with itself is the "effectful graph" in \Cref{fig:commtensorex} (right), in which we have two distinct printers.
  \begin{figure}[H]
    \includegraphics[width=\textwidth]{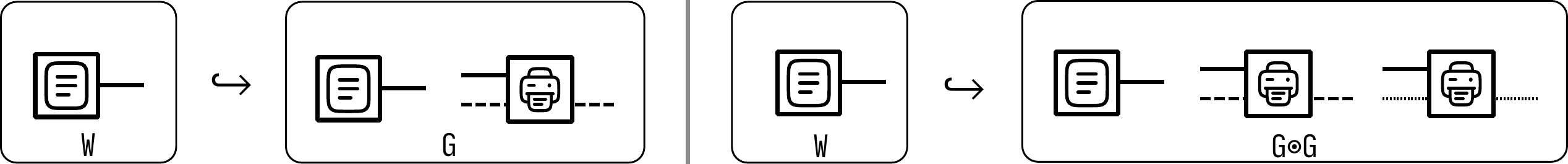}
    \caption{(Left) Effectful graph corresponding to the simple ``theory of a printer''. (Right) Commuting tensor product of this effectful graph with itself: the theory of two printers that may be used in parallel.}
    \label{fig:commtensorex}
  \end{figure}
\end{example}

Our next goal is to show that the free "effectful category" over a commuting tensor product of effectful graphs is isomorphic to the commuting tensor product of the free "effectful categories" over those graphs. First, we recall the definition of commuting tensor product of "effectful categories" given by Garner and López Franco \cite[\S 9.4]{commutativity}, see also \cite[Proposition 4.6.4]{earnshaw2025}. The definition builds on the auxiliary notion of commuting cospan.

\begin{definition}[Commuting cospan]
  A cospan of "effectful categories" \[F : (\efC{A} : \V \shortrightarrow \mathbb{A}) \to (\efC{K} : \V \shortrightarrow \mathbb{K}) \leftarrow (\efC{B} : \V \shortrightarrow \mathbb{B}) : G,\] over the same "monoidal category" $\V$, is said to be \emph{commuting} when, for every $f : A \to A'$ in $\mathbb{A}$ and $g : B \to B'$ in $\mathbb{B}$, the following diagram commutes in $\mathbb{K}$:
  \[
  \begin{tikzcd}
    A \otimes B \ar[d,"A \ltimes G(g)"'] \ar[r,"F(f) \rtimes B"] & A' \otimes B \ar[d,"A' \ltimes G(g)"] \\
    A \otimes B' \ar[r,"F(f) \rtimes B'"'] & A' \otimes B'    
  \end{tikzcd}
  \]
\end{definition}

\begin{definition}
  Let $\efC{A}$ and $\efC{B}$ be "effectful categories" over a "monoidal category" $\V$. The \emph{commuting tensor product of $\efC{A}$ and $\efC{B}$}, denoted $\efC{A} \odot \efC{B}$, is defined to be the apex of the initial commuting cospan $L : \efC{A} \to \efC{A} \odot \efC{B} \leftarrow \efC{B} : R$ of "effectful categories" over $\V$. Explicitly, for any other commuting cospan, $F : \efC{A} \to \efC{K} \leftarrow \efC{B} : G$ of "effectful categories" over $\V$, there exists a unique "morphism of effectful categories" $\efC{A} \odot \efC{B} \to \efC{K}$ such that:
  \[
  \begin{tikzcd}
    & \efC{A} \odot \efC{B} \ar[d,dashed] \\
    \efC{A} \ar[ur,"L"] \ar[r,"F"'] & \efC{K} & \efC{B} \ar[ul,"R"'] \ar[l,"G"] 
  \end{tikzcd}
  \]
\end{definition}

\begin{theorem} \label{thm:tensor}
  Given "effectful graphs" $\efg{G}{W}{C}$ and $\efg{H}{W}{D}$ over the same "monoidal graph" $W$, $\freeeff{(\ef{G} \odot \ef{H})} \cong \freeeff{\ef{G}} \odot \freeeff{\ef{H}}$.
\end{theorem}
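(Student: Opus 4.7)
The plan is to verify directly that $\freeeff{(\ef{G} \odot \ef{H})}$, equipped with the canonical maps from $\freeeff{\ef{G}}$ and $\freeeff{\ef{H}}$, satisfies the universal property of the commuting tensor product over $\freemon{W}$. First, I define candidates for the legs $L^{\mathscr{F}} : \freeeff{\ef{G}} \to \freeeff{(\ef{G} \odot \ef{H})}$ and $R^{\mathscr{F}} : \freeeff{\ef{H}} \to \freeeff{(\ef{G} \odot \ef{H})}$ by applying the functor $\mathscr{F}$ to the maps $L^{C,D}, R^{C,D}$ from Construction \ref{defn:ctg}. These are morphisms of effectful categories over $\freemon{W}$ by functoriality of $\mathscr{F}$ and the fact that both $L$ and $R$ are identities on the underlying monoidal graph $W$.

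Next, I verify that $(L^{\mathscr{F}}, R^{\mathscr{F}})$ is a commuting cospan. It suffices to check that for every generator $f \in E_C$ and $g \in E_D$, their images $L^{\mathscr{F}}(f)$ and $R^{\mathscr{F}}(g)$ interchange in $\freeeff{(\ef{G} \odot \ef{H})}$, since interchange then extends to all composites and whiskerings by functoriality of the premonoidal product. By Lemma \ref{lem:ic}, it is enough to show these images are orthogonal in the underlying device graph $\mathscr{U}(\mathcal{F}(C \odot D))$. This is immediate from Construction \ref{defn:ctg}: the device assignment factors through the coproduct $\mathcal{P}(\dv{C}) + \mathcal{P}(\dv{D})$, so generators from $E_C$ carry only devices from $\dv{C}$ while generators from $E_D$ carry only devices from $\dv{D}$, and these summands are disjoint.

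For the universal property, suppose $F : \freeeff{\ef{G}} \to \efC{K} \leftarrow \freeeff{\ef{H}} : G$ is any commuting cospan of effectful categories over $\freemon{W}$. By Corollary \ref{thm:adjeff}, $F$ and $G$ transpose to morphisms of effectful graphs $F^{\flat} : \ef{G} \to \mathscr{U}(\efC{K})$ and $G^{\flat} : \ef{H} \to \mathscr{U}(\efC{K})$ over $W$. I then show these assemble into a morphism $(F^{\flat}, G^{\flat}) : \ef{G} \odot \ef{H} \to \mathscr{U}(\efC{K})$ using the pushout structure of Construction \ref{defn:ctg}: the action on arrows is the copairing out of the pushout $E_C +_{E_W} E_D$, well-defined because $F^{\flat}$ and $G^{\flat}$ agree on $W$ (both extend the inclusion $W \hookrightarrow \mathscr{U}(\efC{K})$ through $\efC{K}$'s pure part), and the device assignment is copaired from $\dv{C} + \dv{D}$ to the devices of $\mathscr{U}(\efC{K})$ using $\mathsf{dev} \circ F^{\flat}$ and $\mathsf{dev} \circ G^{\flat}$. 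The crucial verification is that this copaired map preserves orthogonality: the nontrivial case is an orthogonal pair with one arrow from each summand, and here the hypothesis that $F$ and $G$ form a \emph{commuting} cospan gives $F^{\flat}(f) \parallel G^{\flat}(g)$ and $G^{\flat}(g) \parallel F^{\flat}(f)$ in $\efC{K}$, so $F^{\flat}(f)$ and $G^{\flat}(g)$ do not interfere and hence share no nontrivial maximal clique in ${\lightning}\efC{K}$, i.e.\ they are orthogonal in $\mathscr{U}(\efC{K})$.

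Transposing back across the adjunction yields the required morphism $\freeeff{(\ef{G} \odot \ef{H})} \to \efC{K}$. Uniqueness follows because any such morphism is determined by its underlying morphism of effectful graphs, which is in turn determined by $F^{\flat}$ and $G^{\flat}$ via the universal property of the pushout used in Construction \ref{defn:ctg}. Commutativity of the two triangles with $L^{\mathscr{F}}, R^{\mathscr{F}}$ is checked on generators and then propagated by freeness. The main obstacle is the orthogonality-preservation verification in the third paragraph: correctly identifying that the commuting condition on cospans in $\Eff$ translates, through the maximal-clique construction of Definition \ref{defn:underlyingdev}, to the orthogonality condition required by morphisms in $\Dev$; the other steps are formal manipulations of pushouts and the adjunction $\mathscr{F} \dashv \mathscr{U}$.
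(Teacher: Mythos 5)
Your proposal is correct and follows essentially the same route as the paper's proof: exhibiting $\freeeff{(\ef{G}\odot\ef{H})}$ with legs $\freeeff(L^{\ef{G},\ef{H}})$, $\freeeff(R^{\ef{G},\ef{H}})$ as the initial commuting cospan, using \Cref{lem:ic} and the disjointness of $\dv{C}+\dv{D}$ for the commuting check, and combining the pushout universal property of \Cref{defn:ctg} with the adjunction of \Cref{thm:adjeff} for existence and uniqueness. Your explicit treatment of the mixed-summand orthogonality case via the interference-clique construction is exactly the point the paper compresses into one sentence.
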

\begin{proof}
  It suffices to show that $\freeeff[\ef{G} \odot \ef{H}]$ is the apex of the initial commuting cospan of "effectful categories" over the "free monoidal category" on $W$. In particular, we will show that:
  \[
  \begin{tikzcd}
    \freeeff{\ef{G}} \ar[r,"\freeeff{(L^{\ef{G},\ef{H}})}"]  & \freeeff{(\ef{G} \odot \ef{H})} & \freeeff{\ef{H}} \ar[l,"\freeeff{(R^{\ef{G},\ef{H}})}"']
  \end{tikzcd}
  \]
  is an initial commuting cospan, where $L^{\ef{G},\ef{H}} : \ef{G} \to \ef{G} \odot \ef{H}$ is the morphism of effectful graphs that is identity on objects and pure generators, and given by the left pushout injection $L^{C,D}_E$ on generators (Construction \ref{defn:ctg}), and similarly for $R^{\ef{G},\ef{H}} : \ef{H} \to \ef{G} \odot \ef{H}$.
  This cospan is commuting because of the way in which $\mathsf{dev}_{C \odot D}$ is constructed using the universal property of the pushout given by $L^{C,D}_E$ and $R^{C,D}_E$. In particular, we have by construction that for any $x \in E_C$ and $y \in E_D$, $L^{C,D}(x) \mathrel{\bot_{C \odot D}} R^{C,D}(y)$. It follows that $\freeeff(L^{\ef{G},\ef{H}})(f) \mathrel{\bot_{\freeeff(\ef{G} \odot \ef{H})}} \freeeff(R^{\ef{G},\ef{H}})(g)$ for any morphisms $f$ in $\freeeff(\ef{G})$ and $g$ in $\freeeff(\ef{H})$, from which it follows in turn that the cospan in question is commuting.

  To see that it is initial, suppose that we have another commuting cospan of effectful categories over $\freemon{W}$, call it $\freeeff{\ef{G}} \xrightarrow[]{P} (\efC{K} : \freemon{W} \shortrightarrow \mathcal{K}) \xleftarrow[]{Q} \freeeff{\ef{H}}.$  We obtain a function $(P \odot Q)_E : E_{C \odot D} \to E_{\mathcal{U}(\ef{K})}$ by the universal property of the pushout:
  \[
  \begin{tikzcd}
    E_W \ar[r,"\ef{H}_E"] \ar[d,"\ef{G}_E"'] & E_D \ar[r,"(\eta_D)_E"] \ar[d,"R^{C,D}_E"] & E_{\mathcal{U}(\mathcal{F}(D))} \ar[dd,"\mathcal{U}(Q)_E"] \\
    E_C \ar[d,"(\eta_C)_E"'] \ar[r,"L^{C,D}_E"'] & E_{C \odot D} \ar[rd,"(P \odot Q)_E"', dashed] \\
    E_{\mathcal{U}(\mathcal{F}(C))} \ar[rr,"\mathcal{U}(P)_E"'] && E_{\mathcal{U}({\ef{K})}}
  \end{tikzcd}
\]
  This mapping on generators of the device graphs defines an identity-on-objects "morphism of effectful graphs" $P \odot Q : \ef{G} \odot \ef{H} \to \mathscr{U}(\efC{K})$, where the fact that $P$ and $Q$ are a commuting cospan ensures that $f \mathrel{\bot_{G \odot H}} g \Rightarrow (P \odot Q)(f) \mathrel{\bot_{\mathscr{U}(\efC{K})}} (P \odot Q)(g)$. Using the bijection of hom-sets given by \Cref{thm:adjeff}, we obtain a morphism $(P \odot Q)^\# : \freeeff{(\ef{G} \odot \ef{H})} \to \efC{K}$ of $\mathsf{EffCat}$ such that:
    \[\begin{tikzcd}
    \freeeff{G} \ar[rdd,"P"'] \ar[r,"\freeeff{(L^{G,H})}"] & \freeeff{(\ef{G} \odot \ef{H})} \ar[dd,"(P\odot Q)^\#" description]  & \freeeff{H} \ar[l,"\freeeff{(R^{G,H})}"'] \ar[ddl,"Q"] \\\\
    & \efC{K}
    \end{tikzcd}\]
  Any other such morphism, transported back across the adjunction, induces another filler for our pushout diagram, and must therefore be equal to $(P \odot Q)^\#$.
\end{proof}

\section{Conclusion and further work}

We have shown how morphisms of free effectful categories can been seen as generalizations of Mazurkiewicz traces, by introducing a new notion of generating data for effectful categories, namely "effectful graphs". This point of view also leads to a simple presentation of the commuting tensor product of free effectful categories. In this paper, we have not given a treatment of presentations involving equations but this is would be a natural next step, and would allow us to express richer descriptions of concurrent systems, such as commuting tensor products of global state \cite{roman2025}.

Prior work has generalized Mazurkiewicz traces along different lines, such as to contextual traces \cite{choffrut2010contextual} or semicommutations \cite{clerbout1987semi}, so these might be investigated in light of this paper. For example, semicommutations generalize Mazurkiewicz traces by allowing commutations to be \emph{directed}, and this may correspond to a notion of order-\emph{enriched} "effectful category".

Since effectful categories are equivalent to strong promonads \cite{EPTCS380.20,commutativity}, recasting our concepts from that point of view would bring our work adjacent to the literature on combining monads \cite{hyland2006combining}, whose relation to the commuting tensor product should be investigated. This may also suggest appropriate generalizations. For example, we might be able to see device information as corresponding to a grading of a strong promonad.

String diagrams bearing some resemblance to resourceful traces have been used informally by Melliès in a treatment of the local state monad \cite{mellies2014local}. The local state monad arises as a combination of global state monads, linked by a theory of allocation: connections to our commuting tensor product should be investigated. Similarly, Barrett, Heijltjes, and McCusker \cite{barrett_et_al:LIPIcs.CSL.2023.10} have made informal use of string diagrams bearing resemblance to resourceful traces, in which devices stand for distinct effects in an effectful $\lambda$-calculus. These instances suggest that it would be worthwhile to formalize the diagrammatic point of view. Sobociński and the third author \cite{roman2025} proved that string diagrams over effectful graphs with a single global device do indeed give free premonoidal categories. However, since it is possible to construct effectful categories whose interference graphs contain an infinite number of maximal cliques, making the string-diagrammatic construction formal in our setting would require restricting to an appropriate subcategory of finitary effectful graphs. With this restriction, we see no reason why string diagrams for effectful categories should not extend to our setting.

\bibliography{bibliography}
\appendix
\newpage
\section{Omitted details} \label{app}

\interchange*
\begin{proof}
  By structural induction on $f$ and $g$. The cases where one of $f$ and $g$ are an identity or both morphisms are whiskered generators are straightforward by the equations of premonoidal categories, or by the imposed interchange equations for whiskered generators. Let $f = f_1 {\comp} f_2$, then by assumption we obtain $\dev[\mathcal{F}G](f_1) \cap \dev[\mathcal{F}G](g) = \dev[\mathcal{F}G](f_2) \cap \dev[\mathcal{F}G](g) = \varnothing$, and by hypothesis $f_1 "\parallel" g$ and $f_2 "\parallel" g$. Say $f_1 : A \to B, f_2 :  B \to C, g : D \to E$. We need to show
  $((f_1 \comp f_2) \rtimes D) {\comp} (C \ltimes g) = (A \ltimes g) {\comp} ((f_1 {\comp} f_2) \rtimes E)$. By the functoriality of $\rtimes$ and associativity of $\comp$, the left-hand side is equal to $(f_1 \rtimes D) \comp ((f_2 \rtimes D)\comp(C \ltimes g))$. Using the two induction hypotheses, this becomes $(A \ltimes g) \comp ((f_1 \rtimes E) \comp (f_2 \rtimes E))$ which is finally equal to the right-hand side by the functoriality of $\rtimes$. The argument for $g || f$ is symmetric, as well as the case when $g$ is a composite.
\end{proof}

\begin{construction}[""Free strict monoidal category""] \label{defn:freemon}
  For a "monoidal graph" $G$, we construct a "strict monoidal category" $\freemon{G}$ on $G$ as follows. The underlying category of $\freemon{G}$ has:
  \begin{itemize}
  \item set of objects $V_G^*$, the free monoid on $V_G$,
  \item set of morphisms constructed according to the following rules:
    \begin{mathpar}
      \inferrule[gen]{f : x \to y \in G}{f : x \to y}

      \inferrule[id]{A \in V_G^*}{1_A : A \to A}

      \inferrule[comp]{f : A \to B \\ g : B \to C}{fg : A \to C}

      \inferrule[tensor]{f_1 : A_1 \to B_1 \\ f_2 :  A_2 \to B_2}{f_1 \otimes f_2 : A_1 \otimes A_2 \to B_1 \otimes B_2}
    \end{mathpar}
    subject to the equations,
    \[
      (fg)h = f(gh) \\ 
       1_Af = f = f1_B \\
      (1_A \otimes 1_B) = 1_{A \otimes B}\]
      \[(f_1 \otimes f_2)(g_1 \otimes g_2) = (f_1g_1) \otimes (f_2g_2) \\
      (f \otimes g) \otimes h = f \otimes (g \otimes h)
    \]
  \item \textbf{composition} and \textbf{identities} are given by the corresponding inference rules.
  \end{itemize}
  The first two equations ensure that this data forms a category and the remaining equations ensure that the evident monoidal product is functorial.

  Similarly, the free "strict monoidal functor" on a "morphism of monoidal graphs" is given by the evident inductive extension.
\end{construction}

\begin{proposition}
  The construction of $\freemon{}$ defines a functor $\freemon{} : \mathsf{MonGraph} \to \mathsf{MonCat}$.
\end{proposition}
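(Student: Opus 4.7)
The plan is to first make the ``evident inductive extension'' precise, then verify well-definedness, strict monoidality of the extension, and functoriality of $\freemon{}$ itself. For a "morphism of monoidal graphs" $\alpha \colon G \to H$, I would define the action of $\freemon{\alpha}$ on objects as the monoid homomorphism $\alpha_V^* \colon V_G^* \to V_H^*$, and on morphisms by structural induction on the inference rules of \Cref{defn:freemon}: send a generator $f \colon x \to y$ to $\alpha_E(f) \colon \alpha_V(x) \to \alpha_V(y)$ (valid because $\alpha$ commutes with source and target assignments in $\MonGraph$), send $1_A$ to $1_{\alpha_V^*(A)}$, send a composite $fg$ to $\freemon{\alpha}(f)\freemon{\alpha}(g)$, and send a tensor $f_1 \otimes f_2$ to $\freemon{\alpha}(f_1) \otimes \freemon{\alpha}(f_2)$.

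Well-definedness amounts to checking that this assignment respects each of the equations imposed in $\freemon{G}$. The key point is that every equation used to quotient the raw term algebra of $\freemon{G}$ is also imposed in $\freemon{H}$, so the two sides of any such equation are sent to equal morphisms in $\freemon{H}$. This gives a routine case analysis across the equations for associativity and unitality of composition, functoriality of tensor (e.g.\ $(f_1\otimes f_2)(g_1\otimes g_2) = (f_1 g_1)\otimes(f_2 g_2)$), and strict associativity of tensor. Strict monoidality of $\freemon{\alpha}$ follows immediately: it preserves the unit and tensor on objects because $\alpha_V^*$ is a monoid homomorphism, and preserves the tensor on morphisms by the tensor clause of the definition. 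It is also a functor since it preserves identities and composition by construction.

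To show $\freemon{}$ itself is functorial, I would verify that $\freemon{\mathrm{id}_G} = \mathrm{id}_{\freemon{G}}$ and $\freemon{\beta \circ \alpha} = \freemon{\beta} \circ \freemon{\alpha}$; both reduce to straightforward inductions on the structure of morphisms of $\freemon{G}$, with the generator case following from functoriality of the assignment $G \mapsto E_G$ on underlying morphisms of monoidal graphs. The only step with any content is the well-definedness check, and since the target of $\freemon{\alpha}$ is defined by exactly the same equational axioms as the source, this check is mechanical; I do not anticipate a genuine obstacle in this proof.
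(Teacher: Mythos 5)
Your proposal is correct and matches the paper's intent: the paper simply declares the extension of $\freemon{}$ to morphisms to be ``the evident inductive extension'' and states the proposition without proof, and your write-up fills in exactly those evident details (definition by structural recursion, well-definedness from the equations being imposed identically in source and target, and the functor laws by induction). There is no divergence in approach and no gap.
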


\begin{definition} \label{defn:umon}
  The ""underlying monoidal graph"" $"\mathcal{U}_{\smallotimes}"M$ of a "monoidal category" $M$ is defined to have set of objects the set of objects of $M$, and set of arrows,

  \[\coprod_{\substack{(X_1 \in \obj{M}, ..., X_n \in \obj{M}) \\ (Y_1 \in \obj{M}, ..., Y_m \in \obj{M})}} M(X_1 \otimes ... \otimes X_n; Y_1 \otimes ... \otimes Y_m)\]

  with the obvious assignments of sources and targets
 of arrows. For a "strict monoidal functor" $F : M \to N$, its ""underlying morphism of polygraphs"" $\mathcal{U}_{\smallotimes}(F)$ is given by the action of $F$ on objects and arrows.
 \end{definition}

\begin{proposition}
  \Cref{defn:umon} defines a functor $\mathcal{U}_{\smallotimes} : \MonCat \to \MonGraph$.
\end{proposition}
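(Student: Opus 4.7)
The plan is to unpack the three standard requirements for $\mathcal{U}_{\smallotimes}$ to be a functor: that on each "strict monoidal category" the construction yields a well-formed "monoidal graph", that on each "strict monoidal functor" it yields a "morphism of monoidal graphs", and that these assignments preserve identities and composition.

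First, given a strict monoidal category $M$, I would check that the data of \Cref{defn:umon} matches the definition of a monoidal graph: the set of objects is $\obj{M}$, and the set of arrows is the disjoint union $\coprod_{(\vec X, \vec Y)} M(X_1 \otimes \cdots \otimes X_n; Y_1 \otimes \cdots \otimes Y_m)$ indexed by pairs of finite lists of objects of $M$. Each element of this coproduct comes tagged with its indexing pair of lists, which directly determines the two functions $\delta_0, \delta_1 : E_{\mathcal{U}_{\smallotimes}M} \to V_{\mathcal{U}_{\smallotimes}M}^{*}$ sending an arrow to its source list and target list respectively. This is the entirety of the data of a monoidal graph.

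Second, given a strict monoidal functor $F : M \to N$, I need to verify that $\mathcal{U}_{\smallotimes}(F)$ is a morphism of monoidal graphs. The action on objects is $F$'s action on objects, which is a function $\obj{M} \to \obj{N}$. On arrows, a morphism $f \in M(X_1 \otimes \cdots \otimes X_n; Y_1 \otimes \cdots \otimes Y_m)$ is sent to $F(f)$, which by strictness of $F$ lies in $N(F(X_1) \otimes \cdots \otimes F(X_n); F(Y_1) \otimes \cdots \otimes F(Y_m))$, i.e., the hom-set indexed by the pointwise $F$-image of the source and target lists. The compatibility condition $F_V^{*} \circ \delta_i = \delta_i \circ F_E$ is then immediate by construction. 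The crucial use of strictness is that $F(X_1 \otimes \cdots \otimes X_n) = F(X_1) \otimes \cdots \otimes F(X_n)$ holds on the nose, so the indexing of hom-sets is transported strictly; no coherence isomorphisms intervene.

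Finally, functoriality is routine: $\mathcal{U}_{\smallotimes}$ preserves identities because $\mathrm{id}_M$ acts as the identity on objects and arrows, and it preserves composition because $(G \circ F)$ acts on objects and arrows as the corresponding composition of actions, and passage to the coproduct is componentwise. The only potential obstacle is conceptual rather than technical, namely keeping track of the coproduct indexing so that strictness is used to match the hom-set containing $F(f)$ with the hom-set indexed by the $F$-image of the list; once this is made explicit, the verification is mechanical.
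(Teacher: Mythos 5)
Your proof is correct; the paper states this proposition without any proof, treating the verification as routine, and your argument is exactly the expected one. The one substantive observation --- that strictness of $F$ is needed so that $F(f)$ lands on the nose in the hom-set indexed by the pointwise $F$-image of the source and target lists, with no coherence isomorphisms intervening --- is correctly identified and is indeed the only point where anything could go wrong.
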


\begin{proposition}
  There is an adjunction $\freemon{} \dashv \mathcal{U}_{\smallotimes} : \mathsf{MonCat} \leftrightarrows \mathsf{MonGraph}$.
\end{proposition}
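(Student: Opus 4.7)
The plan is to establish the adjunction by exhibiting a natural bijection between hom-sets
\[ \mathsf{MonCat}(\freemon{G}, C) \cong \mathsf{MonGraph}(G, \mathcal{U}_{\smallotimes}(C)) \]
for every monoidal graph $G$ and strict monoidal category $C$, together with the unit of the adjunction. First I would define the unit $\eta_G : G \to \mathcal{U}_{\smallotimes}(\freemon{G})$ as the evident inclusion: on vertices it sends $v \in V_G$ to $v$, viewed as a one-element list in $V_G^{*}$; on arrows it sends a generator $f : x \to y$ of $G$ to itself, regarded via the \textsc{gen} rule as a morphism $x \to y$ of $\freemon{G}$, which by \Cref{defn:umon} is an arrow $x \to y$ of $\mathcal{U}_{\smallotimes}(\freemon{G})$.

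The core of the argument is to show the universal property: given any $\alpha : G \to \mathcal{U}_{\smallotimes}(C)$ in $\MonGraph$, construct a unique strict monoidal functor $\alpha^{\#} : \freemon{G} \to C$ extending $\alpha$ along $\eta_G$. On objects, $\alpha^{\#}$ is the unique monoid homomorphism $V_G^{*} \to \obj{C}$ extending $\alpha$ on vertices, using that $\obj{C}$ is a monoid under $\otimes$. On morphisms, define $\alpha^{\#}$ by structural induction on the rules of \Cref{defn:freemon}:
\[ \alpha^{\#}(f) = \alpha(f), \quad \alpha^{\#}(1_A) = 1_{\alpha^{\#}(A)}, \quad \alpha^{\#}(fg) = \alpha^{\#}(f)\alpha^{\#}(g), \quad \alpha^{\#}(f_1 \otimes f_2) = \alpha^{\#}(f_1) \otimes \alpha^{\#}(f_2). \]
Then check that this assignment respects the quotient equations imposed in \Cref{defn:freemon} (associativity and unit laws, the interchange law $(f_1 \otimes f_2)(g_1 \otimes g_2) = (f_1 g_1) \otimes (f_2 g_2)$, and the strict monoidal identities). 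Each of these holds in the target because $C$ is itself a strict monoidal category by \Cref{defn:strictmon}, so the relations translate across directly. Thus $\alpha^{\#}$ is a well-defined strict monoidal functor by construction.

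Next I would verify the triangle equation $\mathcal{U}_{\smallotimes}(\alpha^{\#}) \circ \eta_G = \alpha$, which is immediate from the definition on generators and vertices, together with uniqueness: any strict monoidal functor $F : \freemon{G} \to C$ with $\mathcal{U}_{\smallotimes}(F) \circ \eta_G = \alpha$ agrees with $\alpha^{\#}$ on generators and on single-vertex objects, and hence on all of $\freemon{G}$ by an immediate induction using that $F$ preserves $\otimes$, identities, and composition strictly. Naturality of the bijection in $G$ and $C$ reduces to checking that precomposing with $\freemon{}(\beta)$ and postcomposing with $F$ on the left correspond respectively to precomposing with $\beta$ and postcomposing with $\mathcal{U}_{\smallotimes}(F)$ on the right, which again follows from the inductive definition of $\alpha^{\#}$.

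The main obstacle is really just the bookkeeping in the well-definedness step: one must enumerate every equation in \Cref{defn:freemon} and verify that both sides have equal images under $\alpha^{\#}$. However, because the axioms of $\freemon{G}$ are precisely the defining equations of a strict monoidal category (\Cref{defn:strictmon}), each verification is a one-line appeal to the corresponding axiom in $C$. Consequently, the proof is essentially formal, and the adjunction follows.
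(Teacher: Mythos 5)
Your proof is correct, but it takes a genuinely different route from the paper: the paper's entire proof is a one-line citation, ``A left adjoint to $\mathcal{U}_{\smallotimes}$ is constructed via string diagrams by Joyal and Street,'' deferring to the geometric construction of free monoidal categories as progressive plane diagrams. You instead verify directly that the paper's own syntactic term model $\freemon{G}$ (Construction~\ref{defn:freemon}) satisfies the universal property of the unit, by inducting over the generating rules and checking that the imposed relations are exactly the strict monoidal axioms, hence are preserved into any target $C$. Your argument is more self-contained and better matched to the presentation the paper actually uses, at the cost of the bookkeeping you acknowledge; the citation buys brevity and a conceptually richer (coherence-theoretic) justification. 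One small point worth making explicit in your write-up: an arrow of $\mathcal{U}_{\smallotimes}(C)$ is an element of a \emph{coproduct} indexed by pairs of lists, so the same morphism of $C$ occurs in several components, and both the unit $\eta_G$ and the transpose $\alpha^{\#}$ must select the component determined by the source and target lists of the generator (this is what makes $\alpha^{\#}(f) := \alpha(f)$ typecheck as a morphism $\alpha(x_1)\otimes\cdots\otimes\alpha(x_n)\to\alpha(y_1)\otimes\cdots\otimes\alpha(y_m)$ in $C$). With that made precise, the argument goes through as you describe.
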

\begin{proof}
  A left adjoint to $\mathcal{U}_{\smallotimes}$ is constructed via string diagrams by Joyal and Street \cite{joyal91}.
\end{proof}

 \adj* 
 We begin constructing the unit, $\eta : 1_{\Dev} \to \mathcal{U} \circ \mathcal{F}$. A component $\eta_G : G \to \mathcal{U}(\mathcal{F}(G))$ acts on objects by inclusion as a singleton list, which we again denote simply as $A$,
 \[
  (\eta_G)_V : V_G \to V_{\mathcal{U}(\mathcal{F}(G))} : A \mapsto A.
\]

On arrows, recall that by definition, \[E_{\mathcal{U}(\mathcal{F}(G))} = \coprod_{\substack{(X_1 \in \obj{\mathcal{F}(G)}, ... , X_n \in \obj{\mathcal{F}(G)}) \\ (Y_1 \in \obj{\mathcal{F}(G)}, ..., Y_m \in \obj{\mathcal{F}(G)})}} \mathcal{F}G(X_1 \otimes ... \otimes X_n; Y_1 \otimes ... \otimes Y_m),\]
where $\otimes$ is list concatenation.
Let $A_i, B_j$ be elements of $V_G$, and $f : A_1, ..., A_n \to B_1, ..., B_m$ a morphism in $\mathcal{F}(G)$. Let us denote the coproduct injection into the component $(A_1, ..., A_n), (B_1, ..., B_m)$ of $E_{\mathcal{U}(\mathcal{F}(G))}$ by $\mathsf{in}$. Then we define,
\[(\eta_G)_E : E_G \to E_{\mathcal{U}(\mathcal{F}(G))} : f \mapsto {\mathsf{in}} \triangleright f \triangleleft,\]
where $\triangleright f \triangleleft$ denotes the primitive left-right whiskering of $f$ by the empty list, as in \Cref{defn:freepre}.
It is easy to check that this is a morphism of the underlying "monoidal graphs".
For the preservation of "orthogonality", if $f \perp_{G} g$ then by definition $\triangleright f \triangleleft \parallel \triangleright g \triangleleft$ and $\triangleright g \triangleleft \parallel \triangleright f \triangleleft$ in $\mathcal{F}(G)$. Therefore, $\triangleright f \triangleleft$ and $\triangleright g \triangleleft$ are not connected in the interference graph of $\mathcal{F}(G)$. Now, two maps in $\mathcal{U}(\mathcal{F}(G))$ will be orthogonal as long as they do not have a device in common, but they were to share a device, they would belong to a common maximal clique and thus be connected. This is a contradiction, so they must be orthogonal, $\eta_G(f) \mathbin{"\perp"_{\mathcal{U}(\mathcal{F}(G))}} \eta_G(g)$.

It remains to show that $\eta$ is natural. We must show that for any $\alpha : G \to H$ in $\Dev$ we have:
\[
\begin{tikzcd}
  G \ar[d,"\alpha"'] \ar[r,"\eta_G"] & \mathcal{U}(\mathcal{F}(G)) \ar[d,"\mathcal{U}(\mathcal{F}(\alpha))"]\\
  H \ar[r,"\eta_H"'] & \mathcal{U}(\mathcal{F}(H))
\end{tikzcd}
\]
We proceed by showing that the two above composite "device graph morphisms" have equal object mappings, and equal arrow mappings. First note $\mathcal{U}(\mathcal{F}(\alpha))_V = \alpha_V^{*}$ by definition, so we have,
\begin{align*}
   &((\eta_H)_V \circ \alpha_V)A
   = \alpha_V(A) = \alpha_V^{*}(A)\\
  &= (\alpha_V^{*}\circ(\eta_G)_V)A
  = (\mathcal{U}(\mathcal{F}(\alpha))_V \circ (\eta_G)_V)A
 \end{align*}
for all $A \in V_G$, as required. For the arrows, recall that by \Cref{lem:freepf}, $\mathcal{F}(\alpha)(\triangleright f \triangleleft) = \triangleright \alpha_E(f) \triangleleft$, which gives:
\begin{align*}
  &((\eta_H)_E\circ \alpha_E)f
    = (\eta_H)_E(\alpha_E(f))
   = \mathsf{in}~{\triangleright} {\alpha_E(f)} {\triangleleft}
  = \mathsf{in}~F(\alpha)(\triangleright f \triangleleft) \\
  &= \mathcal{U}(F(\alpha))_E(\mathsf{in}~{\triangleright} {f} {\triangleleft})
  = \mathcal{U}(F(\alpha))_E(\eta_G f)
  = (\mathcal{U}(F(\alpha))_E \circ (\eta_G)) f
\end{align*}
establishing the naturality of $\eta : 1_\Dev \to \mathcal{U} \circ \mathcal{F}$.

We now turn our attention to the counit $\varepsilon : \mathcal{F} \circ \mathcal{U} \to 1_{\PreMon}$, where $\varepsilon_\C : \mathcal{F}(\mathcal{U}(\C)) \to \C$ has action on objects $(\varepsilon_\C)(A_1,\ldots,A_n) := A_1\otimes \cdots \otimes A_n$, where $\otimes$ is the monoidal operation on objects that exists in a "premonoidal category". The action on morphisms is defined by structural recursion as follows: %
\[
\varepsilon_\C(f) =
\begin{cases}
  1_{\varepsilon_\C(A)} & \text{ if $f = 1_A$}\\
  \varepsilon_\C(g) {\comp} \varepsilon_\C(h) & \text{ if $f = g {\comp} h$}\\
  \varepsilon_\C(X) \triangleright f' \triangleleft \varepsilon_\C(Y) & \text{ if $f = X \triangleright \mathsf{in}  f' \triangleleft Y,$}\\
\end{cases}
\]
where $\mathsf{in}$ denotes a coproduct injection.
It is straightforward to check that this is well-defined, is a monoid morphism on objects, and preserves whiskerings.
It remains to show that $\varepsilon$ is natural. We must show that for any $F : \C \to \D$ in $\PreMon$ we have:
\[\begin{tikzcd}
    \mathcal{F}(\mathcal{U}(\C)) \ar[d,"\mathcal{F}(\mathcal{U}(F))"'] \ar[r,"\varepsilon_\C"] & \C \ar[d,"F"]\\
    \mathcal{F}(\mathcal{U}(\D)) \ar[r,"\varepsilon_\D"'] & \D
  \end{tikzcd}\]
We proceed by showing that the two composite functors have equal object mappings and equal arrow mappings. For the mapping on objects we have,
\begin{align*}
  & (F \circ \varepsilon_{\C})(A_1, \ldots, A_n) = F(A_1 \otimes \ldots \otimes A_n) \\
  &= F(A_1) \otimes \ldots \otimes F(A_n) = (\varepsilon_{\D} \circ \mathcal{F}(\mathcal{U}(F)))(A_1, \ldots, A_n).
\end{align*}
For the arrow mappings, we proceed by induction on $f$. The interesting case is when $f = X \triangleright \mathsf{in} f' \triangleleft Y$, for $\mathsf{in} f'$ an arrow of $\mathcal{U}(\C)$,
\begin{align*}
  & (F \circ \varepsilon_\C)(X \triangleright \mathsf{in} f' \triangleleft Y)
  = F(\varepsilon_\C(X) \triangleright \mathsf{in} f' \triangleleft \varepsilon_\C(Y))
  = F(\varepsilon_\C(X)) \triangleright F(\mathsf{in} f') \triangleleft F(\varepsilon_\C(Y)) \\
  &= \varepsilon_\D(\mathcal{F}(\mathcal{U}(F))(X)) \triangleright F(\mathsf{in} f') \triangleleft \varepsilon_\D(\mathcal{F}(\mathcal{U}(F))(Y))
  \\&= \varepsilon_\D(\mathcal{F}(\mathcal{U}(F))(X \triangleright \mathsf{in} f' \triangleleft Y))
  = (\varepsilon_\D \circ \mathcal{F}(\mathcal{U}(F)))(X \triangleright \mathsf{in} f' \triangleleft Y).
\end{align*}
The other cases follow directly from the inductive hypothesis together with the fact that each composite is a "strict premonoidal functor". It follows that $\varepsilon : \mathcal{F} \circ \mathcal{U} \to 1_{\PreMon}$ is natural.

Finally, to obtain an adjunction we must show the triangle identities:
  \[\begin{tikzcd}
    \mathcal{U} \ar[rd,"1_{\mathcal{U}}"'] \ar[r,"\mathcal{U}\eta"] & \mathcal{U} \circ \mathcal{F} \circ \mathcal{U} \ar[d,"\varepsilon\mathcal{U}"] \\
    & \mathcal{U}
  \end{tikzcd}\qquad
  \begin{tikzcd}
    \mathcal{F} \ar[rd,"1_{\mathcal{F}}"'] \ar[r,"\eta\mathcal{F}"] & \mathcal{F} \circ \mathcal{U} \circ \mathcal{F} \ar[d,"\mathcal{F}\varepsilon"] \\
    & \mathcal{F}
  \end{tikzcd}\]

For the first triangle, we have:
\begin{align*}
  & (\varepsilon\mathcal{U} \circ \mathcal{U}\eta)_\C(f)
  = \varepsilon\mathcal{U}_\C({\mathcal{U}\eta_\C}(f))
  = \mathcal{U}(\varepsilon_\C)(f)
  = f
  = 1_{\mathcal{U}(\C)}(f),
\end{align*}

from which we may conclude that the triangle in question commutes. For the second triangle, we proceed by induction on $f$.  The interesting case is when $f$ is a whiskered generating arrow, $f = X \triangleright \mathsf{in} f' \triangleleft Y$,
\begin{align*}
  & (\mathcal{F}\varepsilon \circ \eta\mathcal{F})_G(X \triangleright \mathsf{in} f' \triangleleft Y)
  = {\mathcal{F}\varepsilon_G}(\eta\mathcal{F}_G(X \triangleright \mathsf{in} f' \triangleleft Y)) \\
  &= {\mathcal{F}\varepsilon_G}(\mathcal{F}(\eta_G)(X \triangleright \mathsf{in} f' \triangleleft Y)) \\
  &= {\mathcal{F}\varepsilon_G}(\mathcal{F}(\eta_G)(X) \triangleright \mathcal{F}(\eta_G)(\mathsf{in} f') \triangleleft \mathcal{F}(\eta_G)(Y))) \\
  &= {\mathcal{F}\varepsilon_G}([X] \triangleright \eta_G(\mathsf{in} f') \triangleleft [Y]) \\
  &= {\varepsilon_{\mathcal{F}G}}([X]) \triangleright \mathsf{in} f' \triangleleft \varepsilon_{\mathcal{F}G}([Y]) \\
  &= X \triangleright \mathsf{in} f' \triangleleft Y,
\end{align*}

where for $X = X_1,...,X_n$, we denote by $[X]$ the list $[X_1],...,[X_n]$.
The case where $f$ is $1_A$ follows from the (easily checked) fact that the claim holds for the object mapping, and the case of composition follows directly from the inductive hypothesis together with the fact that the composite is a "strict premonoidal functor".

\adjeff*
\begin{proof}
  We shall make use of the adjunction of \Cref{thm:adj} and the adjunction between "monoidal graphs" and "monoidal categories" (\cite{joyal91}). Let $\efg{E}{V}{G}$ be an "effectful graph". We first define the components of the unit, namely a "morphism of effectful graphs"
  \[\eta_{\ef{E}} : (\efg{E}{V}{G}) \to (\mathscr{U}\circ\mathscr{F})(\efg{E}{V}{G}),\]
by letting the morphism between the "monoidal graphs" be the unit of the adjunction between "monoidal graphs" and "monoidal categories", $\eta^{\smallotimes}_V : V \to \mathcal{U}_{\smallotimes} \mathcal{F}_{\smallotimes} V$, and the morphism between the "device graphs" be given by the unit $\eta_G : G \to \mathcal{U}\mathcal{F} G$ of the adjunction constructed in \Cref{thm:adj}. It is straightforward to verify that the required square commutes.

  Let $\efc{E}{\V}{\C}$ be an "effectful category". For the components of the counit, we define  a "morphism of effectful categories",
  \[\varepsilon_{\efC{E}} : (\mathscr{F} \circ \mathscr{U})\efC{E} \to \efC{E}\]
  by letting the component between the "monoidal categories" be given by the counit of the adjunction between "monoidal graphs" and "monoidal categories", $\varepsilon^{\smallotimes}_{\C} : \mathcal{F}_{\smallotimes}\mathcal{U}_{\smallotimes}\C \to \C$, and the morphism between "premonoidal categories" be given by the counit of the adjunction constructed in \Cref{thm:adj}.
  
  It remains to show that the unit and counit satisfy the triangle identities, which follows simply from the triangle identities of the two adjunctions involved.
\end{proof}

\end{document}